\documentclass[journal,compsoc,10pt]{IEEEtran}

\usepackage[utf8]{inputenc}
\usepackage[T1]{fontenc}

\usepackage{mathptmx} 
\usepackage[scaled=.90]{helvet} 
\DeclareMathAlphabet{\mathsf}{T1}{phv}{m}{n} 
\usepackage{courier} 
\usepackage{microtype} 
\microtypesetup{expansion=true,protrusion=true,stretch=30,shrink=20,factor=1100}

\usepackage[hyphens]{url}

\PassOptionsToPackage{hyphens}{url}

\usepackage{cite}
\usepackage{amsmath,amssymb,amsfonts}
\usepackage{amsthm}
\usepackage{mathtools} 
\usepackage{algorithmic}
\usepackage{graphicx}
\usepackage{textcomp}
\usepackage{xcolor}
\usepackage{listings}
\usepackage{booktabs}
\usepackage{tabularx}        
\usepackage{makecell}        
\usepackage{seqsplit}        
\usepackage{ragged2e}        
\usepackage{enumitem}        
\usepackage{url}
\usepackage{hyperref}

\hypersetup{
 colorlinks=true,
 linkcolor=black,
 citecolor=black,
 urlcolor=blue,
 pdftitle={Token Budgets: An Empirical Catalog of 63 LLM-Agent Budget-Overrun Incidents, with an Affine-Typed Rust Mitigation as a Case Study},
 pdfauthor={Sajjad Khan},
 pdfsubject={Compile-time cost control for LLM agent systems via affine resource types},
 pdfkeywords={Empirical software engineering, LLM agents, Cost control, Affine types, Rust, Failure catalog}
}

\newtheorem{lemma}{Proposition}
\newtheorem{conjecture}{Conjecture}

\lstdefinelanguage{Rust}{
 keywords={fn,let,mut,pub,struct,enum,impl,trait,async,await,move,
 Self,self,true,false,if,else,match,return,for,in,while,loop,
 use,mod,as,where,type,const,static,unsafe,extern,crate,
 ref,break,continue},
 keywordstyle=\color{blue}\bfseries,
 commentstyle=\color{gray}\itshape,
 stringstyle=\color{teal},
 morecomment=[l]{//},
 morecomment=[s]{/*}{*/},
 morestring=[b]",
 basicstyle=\ttfamily\footnotesize,
 showstringspaces=false,
 breaklines=true,
 frame=single,
}
\begin{document}

\title{\sffamily\selectfont Token Budgets: An Empirical Catalog
of 63 LLM-Agent Budget-Overrun Incidents, with an Affine-Typed
Rust Mitigation as a Case Study}

\author{Sajjad~Khan%
\IEEEcompsocitemizethanks{%
\IEEEcompsocthanksitem S.~Khan is an independent researcher; MSc in Data
Science, University of the West of England, Bristol, UK.
E-mail: \texttt{sajjadanwar200@gmail.com}.
ORCID: \href{https://orcid.org/0009-0007-8627-7682}{0009-0007-8627-7682}.
Source code and full artifact bundle:
\url{https://github.com/sajjadanwar0/token-budgets}.}}

\maketitle

\begin{abstract}
\textit{Context.} LLM-agent budget overruns are a documented
production failure class: a single retry loop can accumulate
thousands of dollars on a deployer's account before an operator
notices. The mitigations that have emerged track spend at runtime;
in-process integrity properties---no aliasing of a cost-bearing
value, no double-spend, no use-after-delegation---are enforced, if
at all, by ad-hoc wrappers rather than by the type system.

\textit{Contribution.} The central contribution is empirical: a
documented catalog of \textbf{63 confirmed production incidents}
drawn from 21 orchestration frameworks across 2023--2026, each
backed by a quoted GitHub issue, a maintainer or user statement,
and (where reported) a documented dollar loss, organized into an
eight-cluster failure taxonomy. The corpus is complemented by 47
supplementary structural entries that document the
budget-primitive-missing condition without themselves being
user-reported incidents. The classification carries two-human
independent inter-rater reliability of Cohen's $\kappa = 0.837$ on
the full $N = 113$ four-class sample (and $\kappa = 0.943$ on the
$n = 79$ rows both raters independently marked confirmed).

\textit{Mechanism.} As one mitigation evaluated against this
taxonomy, we build \texttt{token-budgets}, a 1{,}180-line Rust
crate (no \texttt{unsafe}; the core \texttt{Budget} API uses no \texttt{Arc<Mutex<\_\allowbreak >>}) that
operationalizes affine ownership so that cloning, double-spending,
and using a budget after delegating it are \emph{compile errors} in
typed source rather than runtime hazards the operator must remember
to avoid. The dollar cap itself is runtime arithmetic under
estimator assumption A1; the affine layer supplies the in-program
integrity that makes that arithmetic non-bypassable. The bounded
quantity is the provider's \emph{reported} spend, conditional on
charge-truthfulness (A7); the type system supplies bookkeeping
integrity, not a stronger cost guarantee. We use affine rather than
linear ownership deliberately: a dropped \texttt{Budget} can only
under-spend, which is cap-safe, so the property claimed is
non-duplication, not value-conservation. Structurally, the affine
mechanism targets the budget-primitive-missing pattern (frameworks
with no spend cap at all); the runtime cap bounds the other failure
modes only at the consequence level, so the crate is a structural fix
for one pattern plus a general spend bound, not a resolution of the
full catalog. We treat the eight-way mechanism partition as
\emph{exploratory}: independent cluster-assignment agreement is
moderate (Cohen's $\kappa=0.44$, $N=110$, two raters), and the
validated empirical labels are the four-class scheme ($\kappa=0.837$). Binary-level
cap-soundness on the running Tokio binary is left open
(Conjecture~\ref{conj:cap-binary}), and the specification
cross-checks we ship are consistency evidence in the artifact, not
a proof.

\textit{Evaluation.} We evaluate against five production runtimes
(LangGraph, CrewAI, AutoGen, an AgentGuard-style callback, LiteLLM
proxy budgets) plus concurrent work (Agent Contracts) across three
providers and three catalog-derived workloads. A
temperature-stratified live-API test
($T \in \{0.0, 0.3, 0.7, 1.0\}$, $N = 160$) reports zero cap
violations and zero false refusals. On single-agent workloads a
4-line Python counter using the same estimator matches the crate at
$0/30$ overshoot, so the affine discipline's distinguishing value
is not the single-agent cap outcome but non-bypassability under
operator error in multi-agent delegation: the M-delegation-fanout
race documented in 11 catalog incidents is rejected by the borrow
checker at compile time, while the same pattern under
\texttt{asyncio} overshoots $30/30$ while three disciplined
alternatives (including a properly locked Python counter) overshoot
$0/30$ --- a deterministic mechanism split, not a marginal effect. In the
Agent-Contracts comparison at a discriminating
cap ($B_0 = 2{,}000$\,uc, \texttt{claude-haiku-4-5}) the Rust
crate, the Python counter, and Agent Contracts are at operational
parity.

\textit{Scope and cost.} The static estimator reserves 4--6$\times$
actual cost; the \texttt{AdaptiveEstimator} tightens this to
$2.11\times$ median and tokenizer-direct estimation to
$\sim$$1.0\times$ at 939--1{,}749\,ms per-spend latency, with the
integrity property preserved across all three. Reasoning models
(OpenAI o-series, Anthropic extended-thinking, DeepSeek-R1) fall
outside Proposition~\ref{lem:cap}: providers bill for hidden
reasoning tokens not bounded by \texttt{max\_\allowbreak output\_\allowbreak tokens}
(assumption A6), so for these models the approach is a
defense-in-depth layer behind provider-side controls
(\texttt{reasoning\_\allowbreak effort}, \texttt{thinking.\allowbreak budget\_\allowbreak tokens})
rather than a primary cap.
\end{abstract}

\begin{IEEEkeywords}
empirical software engineering, failure catalogs, type
systems, affine types, large language models, LLM agents,
software reliability, cost control, Rust
\end{IEEEkeywords}

\begin{table*}[!t]
\caption{What this work claims, the evidence it offers, and the scope of each claim. Each row keyed to its primary section.}
\label{tab:claims}
\centering
\renewcommand{\arraystretch}{1.25}
\footnotesize
\begin{tabular}{p{0.27\textwidth} p{0.34\textwidth} p{0.32\textwidth}}
\toprule
\textbf{Claim} & \textbf{Evidence} & \textbf{Scope / Limitation} \\
\midrule
Integrity: no in-program aliasing of a \texttt{Budget}.
& Nine \texttt{trybuild} compile-fail tests covering seven distinct rustc diagnostics (E0277, E0308, E0382, E0505, E0507, E0599, E0624) against rustc 1.93.1 (\S\ref{sec:eval-compile}). Lightweight specification checking on a six-variable conservation ledger cross-checks the abstract specification's internal consistency (Appendix~\ref{app:mechanisation} reports the per-tool obligation counts and trust bases). The specification checking is consistency evidence on the abstract specification, not an end-to-end source-to-binary proof; binary-level refinement is Conjecture~1, deliberately unproven.
& Unconditional within the trust boundary of \texttt{Budget::new}. Binary-level cap-soundness on the running Tokio binary (Conjecture~\ref{conj:cap-binary}) is not formally proved in this paper; partial empirical evidence is provided (Appendix~\ref{app:mechanisation}) but does not constitute a refinement proof. The binary-level claim should be read as open. \\
\midrule
Cap-respecting: total \emph{provider-reported} spend $\le$ initial cap (the bounded quantity is the provider's reported usage; it coincides with actual billed cost only under A7).
& Proposition~\ref{lem:cap} under its stated assumptions. 382 live-API sessions across pre-flight, mid-loop, and self-terminated regimes (\S\ref{sec:eval}), zero overshoot; a calibrated simulation extends this to 2{,}628 trials against per-call token distributions fit to 30 real Anthropic runs (\S\ref{sec:eval-at-scale}), reported as arithmetic correctness at scale rather than as independent observations. The primary genuinely-independent evidence is a temperature-stratified sweep ($T\in\{0.0, 0.3, 0.7, 1.0\}$, $N=160$, two production-tier models, \S\ref{sec:eval-temperature-variance}): 0 violations. At a discriminating cap ($B_0=2{,}000$\,uc on \texttt{claude-sonnet-4}, \S\ref{sec:eval-multiruntime}), 0/30 overshoot vs.\ 30/30 baseline; at a sub-floor cap ($B_0=540$\,uc, \S\ref{sec:eval-multiruntime-anthropic}), 0/30 pre-flight refusal vs.\ 30/30 baseline post-hoc overshoot (reported as refusal-to-operate). Cap-sweep robustness: 30/30 across 10 caps (\S\ref{sec:eval-production-tier}); multi-agent delegation: 0/60 aggregate, 0/180 per-child (\S\ref{sec:eval-multiagent}); forgetful-operator contrast (\S\ref{sec:eval-forgetful-operator}): racy 30/30 vs.\ three disciplined alternatives 0/30 each.
& Conditional on provider-stratified A1 (\S\ref{sec:stratified-default}); empirically validated. Default: byte-length for OpenAI/Groq (A1 holds 14/14 adversarial-synthetic cells); \texttt{AnthropicEstimator} for Anthropic ($2.0\times$ safety margin; worst observed under-count $1.88\times$ on nested tool schemas; A1 harness reports 30/30 across three tool-loop workloads; margin sweep at $\{1.0, 1.5, 2.0, 2.5, 3.0\}\times$ on LANG-001, 0/75 overshoot, capital efficiency $60.1\%\to 30.4\%$). A2 (overflow) is a deployment precondition; \texttt{budget-typed-cap} lifts it to compile-time. A7 (\texttt{actual\_\allowbreak charge} truthfulness) and A8 (rate-stability) are operator-supplied trust assumptions shared with every client-side cost-accounting mechanism. \\
\midrule
Catalog: 110 cases across 21 sub-projects, 8 mechanism clusters; 63 confirmed incidents, 28 maintainer-acknowledged gaps, 14 feature requests, 5 borderline (\S\ref{sec:catalog-composition}).
& Public artifact (the catalog CSV is in the public artifact). Independent two-human IRR (an independent second rater; declared in \S\ref{sec:methodology}): $N=113$ rater-pair re-annotation gives $\kappa = 0.837$, 95\% CI $[0.745, 0.919]$, observed agreement $0.894$. Per-class $\kappa$: bf $0.858$, bu $0.876$, mf $0.918$, fr $0.727$.
& Convenience sample of public GitHub issues; closed-source platforms not represented. 8 mechanism clusters are post-hoc analytic and \emph{exploratory}: independent cluster-assignment agreement is moderate (Cohen's $\kappa=0.44$, 95\% CI $[0.34,0.55]$, $N=110$, two raters; \S\ref{sec:methodology}), with cost-observability ($\kappa=0.78$) and multimodal-cost-amplification ($\kappa=0.65$) the two reliably-identified mechanisms. The IRR-validated labels are the four-class confirmed/gap/feature/borderline scheme ($\kappa=0.837$). \emph{Prevalence claims anchored on 63 confirmed incidents, not the full 110.} The budget-primitive-missing pattern ($\approx$12 of 110 rows, an exploratory grouping) upper-bounds the cases the primitive could have prevented in a Rust counterfactual; none of the 21 surveyed frameworks is written in Rust, though a small, production-used Rust agent ecosystem now exists (e.g.\ Rig, AutoAgents); we demonstrate the discipline on one such framework at low integration cost (\S\ref{sec:n1-rig}, $N{=}1$: a hard session cap held across concurrent sub-agents on live traffic, with a compile-time guarantee that a sub-agent cannot duplicate or reuse its budget slice) but make no cross-framework deployment claim. The other 98 rows benefit from the runtime cap arithmetic only when re-implemented or wrapped. \\
\bottomrule
\end{tabular}
\end{table*}

\begin{table*}[t]
\centering
\caption{\textbf{Guarantee map: which property is enforced where, with what
status.} Each property is enforced at exactly one layer (compile-time
type system, runtime arithmetic, or operator-validated calibration);
the paper's contribution is the \emph{combination}, not a unified
proof. The compile-time integrity layer does not prove a cost bound;
the runtime arithmetic layer does, conditional on assumption A1.
Binary-level behavior on the compiled binary is not established and
not claimed.}
\label{tab:guarantee-map}
\small
\begin{tabular}{p{4.4cm}p{2.6cm}p{3.4cm}p{4.6cm}}
\toprule
Property & Layer & Proven? & Trust assumption \\
\midrule
No aliasing of \texttt{Budget} & Compile-time (borrow checker) & Yes ($9/9$ trybuild tests, $7$ rustc codes) & rustc soundness \\
No double-spend (use-after-move) & Compile-time (borrow checker) & Yes (trybuild) & rustc soundness \\
No use-after-split & Compile-time (borrow checker) & Yes (trybuild) & rustc soundness \\
Capability gating of \texttt{Budget::new} & Compile-time (private fn + build.rs allowlist) & Yes (E0624 trybuild) & rustc + operator allowlist file \\
\textbf{Binary-level cap-respecting} & \textbf{Compiled binary} & \textbf{Not claimed} (source-level only; observed clean in all experiments) & rustc codegen + Tokio scheduler \\
Pre-flight cap check & \textbf{Runtime (\texttt{checked\_\allowbreak sub})} & Trivially (one line of arithmetic) & integer arithmetic in rustc \\
Estimator soundness A1 & \textbf{Operator-validated calibration} & \textbf{Empirical only} ($N=178$ calibration+hold-out; up to $9.97\times$ over-reservation on adversarial hold-out) & operator must re-calibrate per provider/model \\
Output-cap honoring A6 & Provider behavior & Empirical only (fails on reasoning models) & provider obeys \texttt{max\_\allowbreak completion\_\allowbreak tokens} \\
Provider-reported \texttt{actual\_\allowbreak charge} & Operator-supplied (provider behavior) & \textbf{Not verified} (\S\ref{sec:refund-semantics}; shared with LangSmith, LiteLLM, AgentGuard) & provider \texttt{usage} truthful; operator reconciles vs.\ billing \\
Streaming-cancellation usage accuracy & Provider behavior & \textbf{Not detected} (canceled streams may omit terminal \texttt{usage} event) & client treats canceled-stream usage as advisory \\
Tokenizer-version stability & Provider behavior & \textbf{Not verified} (calibration is per provider/tokenizer; mid-session rotation invalidates A1) & operator pins tokenizer version in build metadata \\
\bottomrule
\end{tabular}
\end{table*}

\section{Introduction}
\label{sec:intro}

LLM-agent budget overruns are a documented production failure
class. A retry loop that spends a few cents per attempt can,
when no mechanism bounds cumulative cost, accumulate to thousands
of dollars before an operator notices---with the dollar
consequence landing on the deployer's account, not the
framework's. Section~\ref{sec:motivation} catalogs
\textbf{63 confirmed production incidents} across 21 sub-projects
and 18 ecosystems (2023--2026), each with quoted maintainer or
user evidence and (where reported) documented dollar losses.
The 63-incident corpus is complemented by 47 supplementary
catalog entries that document the structural
budget-primitive-missing condition without themselves being
user-reported overrun incidents: 28 maintainer-acknowledged
structural gaps, 14 feature requests for budget primitives that
do not exist in the framework, and 5 borderline cases. We treat
the 63 confirmed incidents as the primary evidence corpus
throughout this paper; the 47 supplementary entries appear where
they corroborate cluster-level mechanism recurrence and are
clearly marked as supplementary rather than counted toward
incident totals. The catalog establishes recurrence of the
failure class across independently-developed projects; we
explicitly do not claim it as a prevalence estimate, since the
sampling frame selects on the dependent variable
(\S\ref{sec:methodology}).

The mitigations that have emerged in response are uniformly
runtime mechanisms. Frameworks add post-hoc budget alerts;
operators wire up software-layer circuit breakers like
AgentGuard to throw a \texttt{BudgetExceeded} once a spend
threshold is crossed; payment providers like ATXP move
enforcement to the network layer, returning HTTP 402 when an
agent's wallet depletes. Each is useful as a second line of
defense. None catches the spend before the API call commits:
the agent either pays for the call and then notices, or has the
call rejected at the network boundary after the request is
already in flight. Section~\ref{sec:related} taxonomizes these
into three layers (compile-time, software-layer, transport-layer).

The substructural-types literature has applied affine and linear
ownership to consumable resources for decades (Move's linear
digital assets~\cite{blackshear-move}, seL4's capability
tokens~\cite{klein-sel4}, the \texttt{governor}
crate~\cite{governor-rate-limit}, Tokio
semaphores~\cite{tokio-semaphore}); the technique is established,
the application to LLM dollar cost is new. Our discipline treats
the per-session cost capability as a Rust-affine value:
delegation across agent boundaries, composition of sub-budgets
across tools, and refund of unspent reservations all flow
through the borrow checker.

\subsection{Scope of the formal claim}
\label{sec:scope}

\emph{Compile-time integrity} throughout this paper means what
the Rust borrow checker enforces on typed source code in a
workspace under \texttt{\#[forbid(unsafe\_\allowbreak code)]}: no aliasing
of a \texttt{Budget}, no double-spend, no use-after-split. The
dollar cap is a separate, runtime claim: \texttt{spend} reserves a
conservative estimate via \texttt{checked\_\allowbreak sub} and refuses any
call that would exceed the cap (\S\ref{sec:refund-semantics}).
Binary-level cap-soundness on the running Tokio binary
(Conjecture~\ref{conj:cap-binary}, \S\ref{sec:split-conj1})
remains open: compiler miscompilations, LLVM optimizations, and
scheduler behavior could in principle violate the cap on the
binary even when the source is well-typed. The specification
cross-checks (Appendix~\ref{app:mechanisation}) establish internal
consistency of the abstract specification, not a source-to-binary
refinement. Because a runtime counter with the same estimator
already achieves the cap-respecting outcome on single-agent
workloads (M2, \S\ref{sec:eval-m2-isolation}), the affine
discipline earns its place on a narrower claim: in-program
integrity under operator error in multi-agent delegation, isolated
by the Forgetful-Operator experiment
(\S\ref{sec:eval-forgetful-operator}).

\subsection{The capital cost of the default discipline}
\label{sec:capital-upfront}

The default static byte-length$+2.0\times$ estimator reserves
4--6$\times$ actual cost ($6.20\times$ mean, $2.51\times$ median
over-reservation across $N = 5{,}190$ per-call events;
\S\ref{sec:capital-efficiency-discussion}); on
prepay accounts a deployment running 1{,}000 sessions/day at
\$0.50 mean cost commits \$3{,}000/day of reservation under the
default. The approach is parametric in estimator choice: the
\texttt{AdaptiveEstimator} reduces median over-reservation to
$2.11\times$ at zero per-spend latency, and tokenizer-direct
estimation reaches $\sim$1.0--1.1$\times$ at the cost of
939--1{,}749\,ms mean per-spend roundtrip latency
(\S\ref{sec:tokenizer-direct-baseline}). Operators choose
between the three based on their capital/latency profile; the
compile-time integrity property is preserved across all three.
Break-even is roughly: a prepay deployment whose
working-capital footprint exceeds about 25\% of operating
revenue should prefer the \texttt{AdaptiveEstimator} or
tokenizer-direct; a post-pay deployment (reserved-not-held
capital) can accept the static default at zero operational cost.
For deployments where capital efficiency dominates over
non-bypassability of the integrity layer,
tokenizer-direct estimation or provider-side per-call caps (AWS
Bedrock budget actions, OpenAI \texttt{max\_\allowbreak completion\_\allowbreak tokens})
are the better choice. Table~\ref{tab:decision-matrix} gives the
full decision matrix.

\subsection{Contributions}
\label{sec:contributions}

This paper makes one empirical contribution and two that ground and
test it. Concretely, it contributes an empirically grounded,
inter-rater-validated taxonomy of LLM-agent budget-overrun failures
---to our knowledge the first to pair cross-framework incident
provenance with a validated coding scheme at this scale---and an
evaluation of affine ownership as one mitigation for the
delegation-related budget-integrity failures that taxonomy surfaces. The catalog is the
result we ask reviewers to weigh; the Rust crate and its evaluation
are the means by which we test what a type-level mitigation actually
buys. Consistent with this ordering, the type-theoretic specification
and the specification cross-checks are deferred to the appendices and
the artifact: they support the case study but are not load-bearing for
the paper's empirical claims, and a reader can assess the catalog and
the head-to-head evaluation without consulting them.

\begin{enumerate}[label=(\roman*),leftmargin=*]
\item \textbf{An empirical catalog and failure taxonomy} of
  63 confirmed LLM-agent budget-overrun reports (plus 47 supplementary
  structural entries) across 21 sub-projects in 18 ecosystems
  (2023--2026),\footnote{Catalog identifiers (CCDE-XXX,
  AGPT-XXX, etc.)\ refer to rows in the catalog CSV in the
  public artifact.} organized into eight architectural mechanism
  clusters, with two-human independent inter-rater reliability Cohen's
  $\kappa = 0.837$ on the full $N = 113$ four-class sample, and
  $\kappa = 0.943$ on the $n = 79$ rows both raters independently
  marked confirmed ($\texttt{bf} \cup \texttt{bu}$)
  (\S\ref{sec:methodology}).
\item \textbf{A misuse-resistant budget-delegation
  discipline in Rust.} We operationalize affine ownership as a small,
  ASCII-stable \texttt{Budget} API ($\sim$225 lines core, $\sim$1{,}180
  non-comment code lines with the provider-stratified estimator and
  extensions; no
  \texttt{unsafe} (enforced by \texttt{forbid(unsafe\_\allowbreak code)}), and no
  \texttt{Arc<Mutex<\_\allowbreak >>} in the core affine ownership path---the
  multi-tenant \texttt{BudgetPool} extension does use one). The borrow checker
  turns cloning, double-spending, and use-after-delegation into
  compile errors (\S\ref{sec:approach});
  the dollar cap itself is runtime arithmetic under estimator
  assumption A1. A correctly written runtime counter reaches the same
  cap-respecting outcome; what the affine type adds is that the
  \emph{incorrect} version does not compile, so the multi-agent
  delegation guarantee no longer depends on the operator getting the
  concurrency discipline right (\S\ref{sec:eval-m2-isolation},
  \S\ref{sec:eval-forgetful-operator}).
\item \textbf{An empirical evaluation} against five
  production runtime mitigations plus concurrent work (Agent Contracts)
  on three providers and three catalog-derived workloads
  (\S\ref{sec:eval}), characterizing what the approach does
  and does not add beyond runtime alternatives, with per-operation
  overhead negligible relative to LLM API latency
  (\S\ref{sec:eval-perf}).
\end{enumerate}

\subsection{Paper structure}
\label{sec:positioning}

Section~\ref{sec:motivation} presents the 63-incident empirical
catalog with its 47-entry supplementary corpus.
Section~\ref{sec:approach} specifies the affine \texttt{Budget}
type and its async Rust integration. Section~\ref{sec:eval} reports the empirical
evaluation: five-runtime head-to-heads, the temperature-stratified
sweep, the M2 estimator-vs-discipline isolation experiment, and
the Forgetful-Operator experiment that isolates the affine
discipline's distinguishing contribution---non-bypassability of
the M-delegation-fanout race within typed Rust source code.
Section~\ref{sec:related} positions the contribution.
Sections~\ref{sec:limitations} and \ref{sec:conclusion} discuss
limitations and conclude.

\subsection{When is the Rust affine discipline the right choice?}
\label{sec:when-to-use}

The contribution is one mechanism among several for bounding LLM
dollar cost. Different deployment contexts have different
preferred mechanisms, and the affine discipline is not the best
choice in every context. Table~\ref{tab:decision-matrix}
summarizes when Token Budgets should be preferred over the
alternatives.

\begin{table}[h]
\caption{Decision matrix for choosing an LLM cost-bound mechanism.
``Stronger guarantee'' refers to non-bypassability and proof-boundary
scope; ``operationally feasible'' refers to deployment cost. The
Rust affine discipline of this paper occupies one row --- the
multi-provider Rust-agent row --- and is deliberately not a
universal solution.}
\label{tab:decision-matrix}
\centering
\scriptsize
\setlength{\tabcolsep}{3pt}
\begin{tabular}{p{0.27\columnwidth} p{0.30\columnwidth} p{0.35\columnwidth}}
\toprule
Deployment context & Recommended mechanism & Why \\
\midrule
\textbf{Single provider, server-side cap} (e.g.\ pure OpenAI) &
\textit{Provider-side hard cap} (e.g.\ \texttt{max\_\allowbreak completion\_\allowbreak tokens}) &
Kernel-enforced on provider servers; cannot be bypassed by client code; zero operational overhead. \\
\addlinespace
\textbf{Single account, AWS Bedrock deployment with session-level cap requirement} &
\textit{AWS Bedrock session-level budget actions} (with automatic service revocation on threshold breach) &
Provider-tier cumulative-cap enforcement with kernel-enforced revocation; operationally stronger than any client-side discipline within its scope. Cannot enforce per-agent budgets within a single session or aggregate caps spanning multiple providers. \\
\addlinespace
\textbf{Existing Python framework} (LangChain, CrewAI, AutoGPT, AutoGen) &
\textit{Runtime cap: LiteLLM proxy, AgentGuard, or our Python port} &
Runtime cap on dollar spend. Python has no affine types; the Python port provides a runtime \texttt{\_\allowbreak consumed} flag plus a narrow Mypy plugin (\S\ref{sec:python-port-positioning}) as defense-in-depth, not as a replacement for the Rust discipline. \\
\addlinespace
\textbf{New Rust agent, multi-provider, consumption billing} &
\textit{Token Budgets + static \texttt{AnthropicEstimator} ($2.0\times$ byte-length, default)} &
Compile-time ownership integrity \emph{within Rust} + runtime cap. Closes the budget-primitive-missing failure mode at the type level (the cap itself is runtime arithmetic). $2{-}6\times$ over-reservation is reserved-not-held; no capital cost on consumption-billed accounts. \\
\addlinespace
\textbf{New Rust agent, prepay-account capital constraint} &
\textit{Token Budgets + AdaptiveEstimator ($\varepsilon = 0.10$)} &
Same compile-time integrity, $\mathbf{1.86\times}$ tighter median reservation (47.5\% capital efficiency vs.\ 25.5\% static, \S\ref{sec:eval-adaptive-adversarial}); 0 A1 violations across 100 prompts. Production default where prepay capital matters. \\
\addlinespace
\textbf{Capital efficiency-critical} (cost of unused budget dominates cost of occasional overshoots) &
\textit{Tokenizer-direct estimation with version pinning}, or post-call observation with SLO &
$\sim$1.0--1.1$\times$ over-reservation vs.\ $1.86$--$6.20\times$ for the affine discipline, at the cost of $\sim$700--3{,}900\,ms per-spend latency (\S\ref{sec:capital-efficiency-discussion}). Brittleness (tokenizer-version drift) in exchange for capital efficiency. \\
\addlinespace
\textbf{Reasoning-model workload} (OpenAI o-series, Anthropic extended-thinking, DeepSeek-R1) &
\textit{Provider-side reasoning controls} (\texttt{reasoning\_\allowbreak effort}, \texttt{thinking.\allowbreak budget\_\allowbreak tokens}) as primary, Token Budgets as defense-in-depth &
Reasoning models violate A6 structurally (hidden thinking tokens not bounded by \texttt{max\_\allowbreak output\_\allowbreak tokens}); pre-flight reservation requires per-deployment calibration of the reasoning-token reservation (\S\ref{sec:limits-reasoning}). Initial calibrations have been observed off by an order of magnitude until tuned. \\
\addlinespace
\textbf{Multi-tenant at scale} (many users, budgets across replicas) &
\textit{Distributed quota service} (Redis lease, Spanner-style reservation, kernel quotas) &
Single-process affine discipline does not extend across processes; the multi-tenant lease sketch (\S\ref{sec:supplementary-extensions}) is not production-validated. \\
\bottomrule
\end{tabular}
\end{table}

The Rust affine discipline is therefore positioned for the
\emph{new-Rust-agent, multi-provider, cumulative-session-cap} cell
of this matrix. It is stronger than runtime client-side
alternatives within that cell (it adds the compile-time integrity
property) but does \emph{not} contend with provider-side
\emph{per-call} caps in their cell (\texttt{max\_\allowbreak completion\_\allowbreak tokens}
is kernel-enforced and unbypassable per-call); the two
address different operational requirements (per-call output
bounding vs.\ session cumulative cap;
\S\ref{sec:eval-provider-caps},
the artifact provider-cap CSVs). Client-side code can
always be bypassed by misbehaving callers outside the Rust trust
boundary of \texttt{Budget::new}.

The dominant deployment contexts among the 63 confirmed catalog incidents are
existing Python frameworks (which dominate the public-GitHub agent
ecosystem we sample, roughly $7{-}8$ of every $10$ retained incidents; the
Rust affine discipline does not apply without re-implementation, and
our Python port provides runtime equivalence to existing mitigations)
and new Rust agent deployments (a small minority of the same surface;
this is the affine discipline's primary deployment context). The
ratios are illustrative estimates from the catalog's framework
distribution (the per-framework summary in \S\ref{sec:catalog}), not
ecosystem-wide measurements; a representative deployment census of
the LLM-agent surface in 2026 has not been published and our sampling
frame selects on the dependent variable (\S\ref{sec:methodology}).
We do not claim the approach is operationally
relevant to the majority of today's deployments; we claim it is the
right primitive for the specific Rust-agent deployment context, and
that the catalog's documented failure modes recur across the full
ecosystem regardless of implementation language (the fine-grained
eight-cluster partition itself is exploratory, \S\ref{sec:limits-empirical}).

\section{Motivation: A Failure Catalog}
\label{sec:motivation}

Budget overruns in LLM-agent systems frequently surface as the
\emph{economic consequence} of upstream agent-failure modes
documented in the broader literature: hallucination-driven loops
that re-issue tool calls until success~\cite{shinn-reflexion,yao-react}, tool-use recursion when an agent re-enters its own
subgoal stack without termination conditions~\cite{wang-survey-agents,xi-agent-survey}, and context-window saturation causing repeated
retrieval expansions~\cite{liu-lost-in-middle}. The approach
proposed here addresses the cost-bounding concern orthogonal to
these upstream failures: even if an agent never halts, the
discipline bounds the deployer's dollar exposure within the
configured cap. The catalog below therefore documents the
\emph{economic surface area} of the wider failure-mode landscape,
not its mechanistic depth.

The catalog comprises 63 confirmed production overrun
incidents---the primary evidence corpus---drawn from 21 LLM-agent
sub-projects across 18 ecosystems (the LangChain ecosystem alone
contributes four: \texttt{langchain}, \texttt{langgraph},
\texttt{langsmith}, \texttt{deepagentsjs}) and four years
(2023--2026). Alongside these sit 47 supplementary structural
entries (28 maintainer-acknowledged gaps, 14 feature requests, 5
borderline), for a 110-row corpus whose case-type breakdown is
detailed in \S\ref{sec:catalog-composition}. Each case is backed by
a specific GitHub issue or pull request, quoted maintainer or user
statements, and (where available) documented dollar losses. The
catalog establishes recurrence, not incidence: its sampling frame
selects on the dependent variable (\S\ref{sec:methodology}), so
aggregate dollar figures read as incident magnitudes rather than
population statistics. To bound the selection concern we add an
independent keyword-neutral baseline cohort
(\S\ref{sec:baseline-replication}), where the primary mechanism
clusters recur. The catalog grounds the design that follows: the
failure shapes the type system must prevent, the mitigations
operators already deploy, and the gap that remains.

\subsection{Methodology}
\label{sec:methodology}

\paragraph{Project selection} The 21 sub-projects comprising the
catalog corpus were selected from GitHub repositories tagged
\texttt{llm-agent}, \texttt{agent-framework}, \texttt{ai-agent},
or \texttt{llm-orchestration} with $\geq 1{,}000$ stars as of January
2026 in Python, TypeScript, or Rust, and filtered to those that
either (a) expose a budget/cost/token-limit option in their public
API or (b) have a GitHub issue mentioning cost overrun or
runaway-spend in the title. Retained projects span the
LangChain/LangGraph, AutoGPT, CrewAI, AutoGen, Pydantic AI, DSPy,
LlamaIndex, and IDE-agent ecosystems among others; the full
per-project mapping is in the artifact's catalog CSV
\texttt{project} column. \emph{Known selection biases:}
(i)~English-language repositories only;
(ii)~closed-source platforms (Cursor, Replit Agent, ChatGPT plugin
store) absent;
(iii)~the $\geq 1{,}000$-star threshold filters out early-stage
projects. The catalog is a convenience sample of public
English-language failures in established projects, not a
representative sample.

\paragraph{Search and inclusion}
We searched issue trackers of the 21 projects using failure-related
keywords (``budget,'' ``cost,'' ``token limit,'' ``recursion,''
``infinite loop,'' ``stale,'' ``runaway,'' ``embedding dimension,''
``base64,'' ``streaming usage,'' ``\texttt{max\_\allowbreak turns}''). From 167
candidate URLs across 16 batches (the full survey recorded in
\texttt{catalogue.csv}), we retained 110 satisfying
$\geq 1$ of: (a) explicit dollar loss or token-count amplification
in the body, (b) a specific failure mechanism described by filer
or maintainer, (c) maintainer acknowledgement of the broader
pattern. The full survey including the 57 triaged-out rows is in
\texttt{catalogue.csv}; retained rows are tagged
\texttt{paper:*} and triaged-out rows carry a
\texttt{SKIPPED for paper:} prefix naming one of seven exclusion
codes.

\paragraph{Sampling frame caveat} The inclusion criteria above constitute a \emph{failure-confirming} sampling frame: we retained projects specifically because their public artifacts surfaced budget-failure activity. The catalog establishes that the 21 sub-projects identified by these criteria exhibit the failure class, not that the failure class is necessarily prevalent across the ecosystem at large. A complementary sampling step (the top-N most-starred LLM-agent projects without a failure-keyword filter, coded under the same codebook) would strengthen the ecosystem-wide prevalence claim and is left as follow-up replication.

\paragraph{Construct validity: three known threats}
Three construct-validity threats apply to catalogs of this form; we
document our response to each:

\emph{(C1) Selection on the dependent variable.}
The catalog selects projects partly on the presence of budget-failure
indicators. We do not claim ecosystem-wide prevalence; we claim
existence and recurrence of the failure pattern across $N=21$
independently-developed projects. The dollar-loss aggregates we report
are sums across the selected cases, not estimates of ecosystem-wide
expected loss; readers should treat them as ``at-least'' lower bounds
witnessed in the public record.

\emph{(C2) Single-coder baseline replication.}
An initial coding pass was conducted by a single rater. We addressed this
with an independent two-human IRR study where a second coder
(Zahid Hussain, Mindgigs, Peshawar, Pakistan; no prior catalog
exposure, no compensation, blinded to original codings)
re-annotated all 109 baseline rows under the
published codebook (Phase~1, a tag-level coding pass, $\kappa = 0.832$),
with a subsequent
Phase~2 covering the four entries added during continued catalog
construction; the combined $N=113$ sample yields Cohen's $\kappa = 0.837$
(95\% CI $[0.745, 0.919]$); see
the per-class breakdown below and the
\texttt{fr}/\texttt{bu} boundary disclosure.

\emph{(C3) Post-hoc taxonomy.}
The eight failure-mechanism categories were derived by iterative open
coding of the retained issues, consolidating proximate cost mechanisms
into clusters across repeated passes during construction. We mitigated
post-hoc category-drift risk by (i) freezing the codebook before the
IRR study began, (ii) documenting per-tag decision rules and seven
exclusion codes in \texttt{codebook\_\allowbreak v1.md}, and (iii) identifying the
\texttt{fr}/\texttt{bu} boundary as the codebook's weakest seam in the
IRR analysis. The eight-cluster structure reproduced on the
independent narrow-net batch (\S\ref{sec:catalog-protocols}), which we
read as a stability signal for the mechanism partition rather than its
validation; we are explicit (\S\ref{sec:limits-empirical}) that
the eight-cluster partition is exploratory: a blind second-rater pass
over all 110 rows gives moderate cluster-assignment agreement (Cohen's
$\kappa=0.44$, 95\% CI $[0.34,0.55]$), with cost-observability
($\kappa=0.78$) and multimodal ($\kappa=0.65$) reliably identified and
the remaining boundaries overlapping. The
taxonomy is auditable: any reviewer can re-derive it from the per-row
evidence in \texttt{catalogue.csv}.

Each retained case carries a verbatim quotation from the underlying
GitHub artifact and is tagged with one of five labels:
\texttt{bug\_\allowbreak report}, \texttt{bug\_\allowbreak fixed\_\allowbreak by\_\allowbreak framework},
\texttt{bug\_\allowbreak unfixed}, \texttt{feature\_\allowbreak request},
\texttt{maintainer\_\allowbreak framing}. The retained cohort comprises 27
\texttt{bf}, 55 \texttt{bu}, 7 \texttt{mf}, 21 \texttt{fr} cases
(catalog total $N=110$). The IRR re-annotation was conducted in two
phases. \emph{Phase 1} (baseline, $N=109$) was completed before four
catalog entries (\texttt{CCDE-002}, \texttt{LANG-020}, \texttt{LANG-035},
\texttt{SMAG-001}) were added during continued construction. \emph{Phase 2}
(supplementary, $N=4$) re-rated those four entries by rater B independently of
rater A. Across both phases the full IRR sample is $N=113$ rater-pair
observations covering all 110 current catalog rows (with 3 baseline ratings
on IDs that were renumbered during catalog cleanup retained in the sample,
since the underlying issue content rather than the catalog ID is what was
rated). Per-class kappa in the inline per-class summary reports the
augmented $N=113$ sample (27 \texttt{bf} + 57 \texttt{bu} + 7 \texttt{mf}
+ 22 \texttt{fr} = 113).
The methodology follows the broad shape of failure-pattern catalogs
in the SE literature (Yuan et al.~\cite{yuan-osdi-2014}; Lu et
al.~\cite{lu-bug-study}; IRR protocol per
Kitchenham~\cite{kitchenham-slr} and
Krippendorff~\cite{krippendorff-content-analysis}). Methodologically,
the catalog is constructed as a qualitative coding study rather than a
systematic prevalence survey: the eight mechanism clusters emerge from
iterative open coding in the grounded-theory
tradition~\cite{glaser-strauss-grounded} and are consolidated using the
thematic-synthesis steps recommended for software
engineering~\cite{cruzes-dyba-thematic}, with per-tag decision rules
recorded in a codebook in the manner of
Salda\~na~\cite{saldana-coding}; the two-coder reliability and
construct-validity framing follow the case-study and experimentation
guidelines of Runeson and H\"ost~\cite{runeson-host-casestudy} and
Wohlin et al.~\cite{wohlin-experimentation}. We read the catalog as a
multiple-case study of independently-developed projects, not as a
random sample, precisely because public GitHub issue trackers are a
biased and noisy frame --- the documented perils of mining
GitHub~\cite{kalliamvakou-github} (active-project skew, missing or
private incidents, and selection on visibility) motivate the
recurrence-not-prevalence scoping we adopt throughout (C1 below). The full IRR
result is reported next: across both phases, the $N=113$ rater-pair
sample gives $\kappa = 0.837$ (95\% bootstrap CI $[0.745, 0.919]$),
corresponding to almost-perfect agreement. Phase~2 agreement was $4/4$
on the four supplementary rows; the slight upward shift from the
$\kappa = 0.832$ baseline ($N=109$) reflects the additional perfect-agreement
observations and is within the original bootstrap interval.

Per-class one-vs-rest Cohen's $\kappa$ on the $N{=}113$
two-phase re-annotation
(artifact: \texttt{irr/per\_\allowbreak class\_\allowbreak kappa.csv}):
$\kappa_{\mathtt{bf}}{=}0.858$ (obs.\ agreement $0.947$, $n=27$),
$\kappa_{\mathtt{bu}}{=}0.876$ ($0.938$, $n=57$),
$\kappa_{\mathtt{mf}}{=}0.918$ ($0.991$, $n=7$),
$\kappa_{\mathtt{fr}}{=}0.727$ ($0.911$, $n=22$),
with $\kappa{=}0.943$ ($0.975$, $n=79$) on the subset of rows
\emph{both} raters independently classed as confirmed
(\texttt{bf}$\cup$\texttt{bu}); note that this figure conditions on
agreement about the confirmed/not-confirmed boundary and is therefore
an optimistic within-class measure, not a substitute for the
headline $N{=}113$ value. The $\kappa{\geq}0.85$ on
the principal classes is the result we report; the lower
$\kappa_{\mathtt{fr}}$ identifies the \texttt{fr}/\texttt{bu}
boundary as the least reproducible label. The
$\kappa{=}0.943$ on the confirmed subset indicates that
the headline $\kappa{=}0.837$ is conservative: agreement on the
operationally most-relevant confirmed-incident classification
is substantially stronger than the aggregate headline figure
suggests.

The \texttt{mf} class rests on only $n=7$ cases, so its
$\kappa_{\mathtt{mf}}{=}0.918$ (95\% CI $[0.658, 1.000]$) is
suggestive rather than dispositive. The \texttt{fr}/\texttt{bu}
boundary is the codebook's weakest seam: interrogative-titled issues
(``Can we control X?'', ``How do I count Y?'') read plausibly as
either a feature request or an unfixed bug whose fix would require a
new feature, so $\kappa_{\mathtt{fr}}{=}0.727$ should be read as the
reproducibility ceiling for this label at the issue-body-excerpt
granularity the public record affords. Because that boundary is
convention-sensitive, the paper anchors its strong claims on the
convention-invariant union
$\mathtt{bf}\cup\mathtt{bu}\cup\mathtt{mf}\cup\mathtt{fr}$---the
budget-primitive-missing issues across 21 frameworks, which is
invariant to where the internal \texttt{bu}/\texttt{fr} cut is
drawn---rather than on the precise confirmed-incident count. The
headline $\kappa = 0.837$ and the confirmed/supplementary partition
rest on the v1.0 codebook, which we report as primary. The 12
disagreements in the $N=113$ re-annotation, with their adjudicated
resolutions, are documented in \texttt{irr-disagreements.md}; the
codebook (v1.0), the blinded coding sheet, the completed coding
sheets, and the computation script \texttt{irr\_\allowbreak scaffold.py} ship in
the artifact under \texttt{irr/}.

Two further methodological notes: (i) the candidate-sourcing protocol shifted partway through the catalog construction from an LLM-assisted keyword-expansion pipeline (Batch~1, retention $\sim 8\%$) to a direct human keyword-skim protocol (Batch~2, retention $\sim 83\%$); both phases applied the same written codebook and the protocol-stratification details are reported in \S\ref{sec:catalog-protocols}; (ii) public GitHub issues bias the sample toward popular open-source frameworks and under-represent closed-source platforms and incidents filed only on internal trackers; both threats are revisited in Section~\ref{sec:eval-tov}.

\subsection{Catalog collection methodology: protocol stratification}
\label{sec:catalog-protocols}

The catalog's candidate sourcing proceeded in two main batches
(reported separately because the protocols measure different
things), plus a small number of entries added during continued
construction (\S\ref{sec:methodology}). \emph{Batch~1} (76 cases, summer~2025): keyword-templated
GitHub queries returned $\sim 950$ candidates, an LLM filter narrowed
by cost-relevance, the human rater coded against the codebook;
retention $\sim 8\%$ by design (wide-net favors recall).
\emph{Batch~2} (33 cases, autumn~2025): direct human keyword
search without LLM pre-filter; retention $\sim 83\%$ (narrow-net
favors precision). We stratify rather than re-weight because the
protocols are complementary; the 33-case narrow-net subset alone
exhibits the same eight-cluster mechanism taxonomy, per-framework
distribution, and case-type breakdown as the full $N=110$ catalog. The
$N=113$ IRR sample (\S\ref{sec:eval-tov}) spans rows from both
batches, so the headline $\kappa = 0.837$ is not an artifact of either
sourcing protocol alone.

\subsection{Baseline replication on an independent project cohort}
\label{sec:baseline-replication}

Because the 21-project catalog was constructed by following
budget-cost search keywords, its sampling frame could in principle
confirm the existence of cost incidents by selecting on the
dependent variable. To bound that risk, we constructed an independent
baseline cohort: 20 GitHub projects ranked by stars under the
``LLM agent'' search term, exclusions logged transparently,
3{,}461 issues pulled and 186 body-read under the same codebook.
\textbf{Headline findings:} 63 candidate qualifying rows in
12 of 20 projects (60\% coverage; 95\% Wilson CI
$[40\%, 77\%]$); the four primary mechanism clusters recur,
plus an additional M-rate-limit-amplification cluster in
VoltAgent~\#1276; the original budget-keyword filter would have
caught 61/63 qualifying rows (97\% catch rate).
\textbf{Methodology limitations (exploratory replication):}
all 186 baseline codings were single-coder (no IRR for this
cohort, in contrast to the primary catalog's $\kappa=0.837$ on
$N=113$); $\sim$25 of 63 rows have full-thread evidence and the
balance have title-plus-first-comment evidence flagged as pending;
the 97\% catch rate is within-screen, not a full recall estimate.
Subject to these limitations, the cohort supports two weaker claims:
the mechanism clusters recur in an independently selected cohort
(external validity); and demand for budget-discipline primitives is
sustained (e.g., SuperAGI has three independent ``Budget Manager''
feature requests). We make no incidence claim from this data:
60\% is a project-level coverage statistic, not an incidence rate.
Full audit trail, per-project breakdown, and substituted candidates
in \path{token-budgets-baseline/}.

\subsection{Catalog composition: confirmed failures, design gaps, and feature requests}
\label{sec:catalog-composition}

The unqualified term ``failures'' covers heterogeneous case types
in the budget-overrun literature. We disaggregate the 110-row
catalog by case \emph{type}, distinct from the 8-cluster
mechanism taxonomy:

\begin{itemize}\itemsep0pt
\item \textbf{Confirmed production failures} ($n=63$, 57.3\%):
GitHub issues with reproducible cost-overrun symptoms reported
by an end-user, accompanied by quoted issue text and (where
available) cost figures. The DNSW-001 incident (a single user
reporting $\approx$\$2{,}150 / EUR~2{,}000 in unintended spend) is the most
operationally severe and dominates the catalog's monetary
total; we report it as a single observation and do not project
prevalence from it. The boundary between this class and feature
requests (\texttt{fr}) is the codebook's weakest seam and is
convention-sensitive (\S\ref{sec:methodology}); the count $n=63$
should be read with that caveat, and the paper's strong claims are
anchored on the convention-invariant union of all four classes rather
than on this internal split.
\item \textbf{Maintainer-acknowledged structural gaps} ($n=28$,
25.5\%): cases where a framework maintainer (not the original
reporter) responded that the budget primitive is structurally
absent or known-broken, and either accepted the limitation as
``working as intended'' or scheduled it as a future-release
issue. These are not user-reported incidents but framework-side
acknowledgements that the cluster-1 \emph{budget-primitive-missing}
condition exists.
\item \textbf{Feature requests for budget primitives} ($n=14$,
12.7\%): issues opened by a user requesting a budget primitive
that does not exist in the framework, without a specific
overrun-incident report attached. These corroborate the
cluster-1 hypothesis (the primitive is missing) but are not
themselves failure incidents.
\item \textbf{Mixed / borderline} ($n=5$, 4.5\%): cases where the
issue thread contains both a request and a partial overrun
report; counted once and resolved to a single four-class
\texttt{label} for the reliability analysis.
\end{itemize}

The four case types above are an analytic disaggregation of the
corpus and are distinct from the four-class \texttt{label} column
(\texttt{bf}/\texttt{bu}/\texttt{mf}/\texttt{fr}) that the public
\texttt{catalogue.csv} encodes and that the inter-rater study scores:
the $63/28/14/5$ split is a coarser editorial grouping (a confirmed
incident may carry \texttt{bf} or \texttt{bu}; a maintainer-gap row
may carry \texttt{bu} or \texttt{mf}) and is not, in the present
artifact, recoverable by a single column filter. We therefore report
$63/28/14/5$ as a descriptive composition, while every count the
artifact mechanically re-derives (the 110 retained rows, the eight
clusters, and the IRR) keys on the \texttt{label} and
\texttt{primary\_\allowbreak cluster} columns.

The 110-row aggregate count is a corpus of
\emph{evidence of the budget-primitive-missing condition}
across 21 sub-projects, not 110 distinct paying-customer
overrun incidents. Our headline IRR study (full-sample Cohen's
$\kappa=0.837$, $N=113$) covers the union of all four case
types; per-case-type breakdown of agreement is in
\texttt{irr-disagreements.md}. Prevalence
claims (``X\% of deployments overrun'') are not supported by
this catalog and we make none; the catalog establishes the
existence and recurrence of the structural class, not its
incidence rate.

\subsection{Catalog}
\label{sec:catalog}

The catalog contains 110 retained rows organized along two
dimensions. The inline summary above lists the eight
architectural mechanism clusters that emerged during construction,
each with its row count, framework reach, and number of distinct
sub-mechanisms documented. The per-framework summary in \S\ref{sec:catalog}
gives the per-framework distribution of retained rows. The full
per-row evidence---including identifier, framework, year,
classification tag, GitHub URL, quoted user/maintainer evidence,
documented dollar loss, and per-row notes---is in the artifact's
\texttt{catalogue.csv} file. We use catalog identifiers
(e.g., LANG-035, MAST-014, PYAI-002, CRAI-014) throughout the rest
of the paper; the artifact provides a one-paragraph note for every
identifier.

The eight architectural mechanism clusters in the catalog
(rows / frameworks), re-derived from the \texttt{primary\_\allowbreak cluster}
column over all 110 retained rows:
M-retry-loop (27 / 12),
M-cost-observability (22 / 9),
M-context-amplification (13 / 7),
M-storage-amplification (13 / 5),
M-budget-primitive-missing (12 / 6),
M-delegation-fanout (11 / 6),
\texttt{providerOptions}-silently-dropped (6 / 3), and
M-multimodal-cost-amplification (6 / 2).

These eight clusters are an \emph{exploratory, descriptive}
organization of the corpus. A blind second-rater pass gives moderate
cluster-assignment agreement (Cohen's $\kappa=0.44$; cost-observability
and multimodal are the two reliably-identified mechanisms,
$\kappa=0.78$ and $0.65$), so we use the clusters to structure the
discussion of failure modes but do not treat the partition as
validated or rest claims on exact per-cluster counts
(\S\ref{sec:limits-empirical}).

\paragraph{Per-framework distribution (totals 110 rows)}
LangChain ecosystem (\texttt{langchain}, \texttt{langgraph},
\texttt{langsmith}, \texttt{deepagentsjs}) 33; Mastra 17; AutoGen 11;
CrewAI 11; smolagents 5; Aider 6; DSPy 4; LlamaIndex 3; Pydantic
AI 7; \texttt{claude-code} 2; AutoGPT 2; \texttt{gpt-engineer} 2;
OpenAI Agents SDK 2; and one each from danswer, openclaw, nanobot,
paperclipai, and codex (5 total). Per-row evidence in the
artifact's \texttt{catalogue.csv}.

The catalog tracks per-row dollar losses where reported and
amplification ratios where measurable; the strongest individual
amplification anchors (a 31x context overflow from a single
base64-encoded image~\cite{lang-035-issue}, a 2-million-token
observer-LLM call~\cite{mast-014-issue}) appear in
Section~\ref{sec:patterns} where they are organized by mechanism
cluster. The catalog distributes across the four catalog years
as 18 cases in 2023, 25 in 2024, 52 in 2025, and 15 in the
partial year 2026 (through April). We do not interpret this
distribution as evidence of acceleration: the LLM-agent
ecosystem itself expanded substantially over the same period
(new frameworks, growing GitHub activity, methodology
refinements mid-survey), and the raw issue count is not
normalized against ecosystem growth. The temporal distribution
is reported for descriptive completeness; the primary
observation is that the failure class continues to be documented
across all four catalog years across 18 ecosystems rather than
being absorbed as a solved problem in any one year.

\subsection{Patterns}
\label{sec:patterns}

Twelve observations from the catalog inform the design that follows.
The first six are recurring patterns we observed in the original
catalog and have persisted as the catalog grew; the next six emerged
from the expanded archaeology covering eight architectural mechanism
clusters across 18 ecosystems.

\textbf{Reactive fixes dominate.} For cases tagged
\texttt{bug\_\allowbreak fixed\_\allowbreak by\_\allowbreak framework}, the median time from filing to
fix is short: CCDE-001 was patched in two days; AIDR-001 was patched
the same day in commit \texttt{f2e1e17}; CRAI-002 was patched the
same day; CRAI-011 was merged 1 day after filing (the fastest
fix-resolution in the catalog) with the reporter having identified
the exact regression-introducing commit \texttt{efe27bd} themselves
in the issue body. This is not for lack of engineering attention.
The fixes work, and they ship quickly. But the fix can only ship
after the bug fires and a user reports it---which means the dollars
have already been paid by at least one user. We found no case in
the catalog where a budget overrun was prevented before any user
paid for it.

\textbf{Even Anthropic's first-party tool is affected.} Two of the
110 entries are claude-code issues, both exhibiting the same
compaction-loop signature; the activity log of CCDE-002 alone
references at least ten additional sibling claude-code issues filed
August through December 2025 with the same signature, several of
which the github-actions bot itself flagged as duplicates of CCDE-002
at filing time. CCDE-001 alone documents \$235 spent in four days by
a single user (about \$59/day, or roughly \$1{,}760 extrapolated
linearly to a full month). The
two further first-party-vendor frameworks now in the catalog
(Pydantic AI, OpenAI Agents SDK) confirm the pattern is not
specific to Anthropic: PYAI-001 documents a Pydantic AI
multi-agent docs example that fails on its own \texttt{total\_\allowbreak tokens\_\allowbreak limit}
default; OAAS-002~\cite{oaas-002-issue} documents a maintainer
admission that ``We don't have anything amazing here right now''
for graceful degradation when \texttt{max\_\allowbreak turns} is exceeded. The failure class is not confined to fringe libraries: three
first-party vendor agent frameworks exhibit it.

\textbf{Context amplification arises at three architectural
levels.} The M-context-amplification cluster (13 rows across seven
frameworks) collects agent loops and compile-time optimizers that
produce unbounded context growth; the same mechanism also surfaces
as a secondary effect in several incidents whose primary cluster
lies elsewhere (retry-loop, delegation-fanout). Three architectural
levels recur, which we illustrate with representative incidents:
\emph{compile-time} (DSPY-001 and DSPY-003 at MIPROv2 batch-summary
and bootstrap-demo steps, where the optimizer programmatically
constructs prompts that include training data, producing 70\%
overshoot in DSPY-001 and base64-image injection in DSPY-003);
\emph{runtime agent-step-loop} (SMAG-002 stuck-open for 13+
months with multiple PRs in flight, SMAG-006 with maintainer
@aymeric-roucher's paper-relevant admission ``we decided against
truncating or doing any kind of post-processing on steps, because
that would introduce silent errors,'' and MAST-004 documenting
TokenLimiter not firing per loop-iteration);
and \emph{runtime observability-layer} (MAST-014 with the
catalog's largest single-call amplification: up to 2 million
tokens in a single observer LLM call during tool-heavy runs,
where the observation manager itself becomes the cost amplifier).
@aymeric-roucher's reasoning---``would introduce silent errors''---is
the failure mode that type-level discipline addresses:
type capabilities make budget exhaustion explicit-at-compile-time
rather than implicit-at-runtime.

\subsection{The three-layer enforcement taxonomy}
\label{sec:taxonomy}

The mitigations actually deployed in response to the catalog's
failures fall into three layers, defined by where in the system the
enforcement occurs.

\textbf{Compile-time layer (this work).} The borrow checker rejects
programs that alias, double-spend, or reuse a delegated
\texttt{Budget} \emph{before} the binary is built --- the in-program
integrity errors, not the dollar bound itself, which is runtime
arithmetic (\S\ref{sec:reservation}). Caught before deployment. We
found no prior published or open-source work applying compile-time
ownership integrity to LLM cost; the closest analogues are discussed
in Section~\ref{sec:related}.

\textbf{Software layer.} Runtime middleware tracks spend and pauses
execution when a threshold is crossed. Examples: an AgentGuard-style
budget callback~\cite{agentguard}, paperclipai's monthly-budget
feature, and the proposed nanobot \texttt{maxCostPerMessage} flag.
Caught at runtime, after the spend has
occurred.

\textbf{Transport layer.} A payment-aware HTTP intermediary returns
a 402 Payment Required when the agent's wallet depletes; the gateway
rejects further requests through that wallet. Example: ATXP. Caught
at the network boundary, after the request has already been issued.

These three layers are complementary, not competing. An operator
running a high-stakes agent in production might deploy
mitigations at all three, with each catching a different failure
mode. Section~\ref{sec:related} places each prior system in this
taxonomy and discusses why we treat the compile-time layer
first: it is the only one that catches the \emph{integrity}
violations (aliasing, double-spend, use-after-delegation) before
any external resource is consumed, and the only one that gives the
developer feedback on the affected program before deployment. The
dollar bound is enforced at the runtime-arithmetic layer regardless,
so this ordering is about where misuse-resistance lives, not about
which layer enforces the cap.

\section{The mitigation: an affine \texttt{Budget} (case study)}
\label{sec:approach}\label{sec:implementation}

The catalog of Section~\ref{sec:motivation} identifies eight
distinct architectural mechanism clusters underlying production
budget-overrun incidents. Two distinct mechanisms answer it, and we
are careful not to conflate them. A \emph{runtime cap} (the
\texttt{checked\_\allowbreak sub} reservation of \S\ref{sec:reservation}, under
estimator assumption A1) bounds the dollar consequence of \emph{all
eight} clusters once it is in place --- whatever the upstream cause,
in-program spend cannot exceed $B_0$. That cap is not novel: a
correctly written runtime counter, Agent Contracts, or a LiteLLM
proxy reach the same outcome. The affine \texttt{Budget} type
contributes something narrower and orthogonal: it makes the cap's
bookkeeping non-bypassable in typed source, so the operator cannot
\emph{accidentally} defeat it through the aliasing/double-spend/%
use-after-delegation mistakes the catalog documents. Exactly one
cluster --- \emph{M-budget-primitive-missing} (12 of 110 rows, six
frameworks; see the summary in \S\ref{sec:patterns}) --- is fixed
\emph{at the type level} rather than merely bounded: its failures are
not bugs in an existing primitive but the \emph{absence} of one
(frameworks ship without a first-class aggregate-budget primitive, or
ship one that regresses silently, or expose it only via callback
closure), and a type that pins the mechanism cannot regress in those
ways. So the honest scope is: the cap covers the catalog as a
consequence-bound; the type system addresses one cluster structurally
and, across the rest, removes the operator-discipline requirement that
the M-delegation-fanout experiment (\S\ref{sec:eval-forgetful-operator})
isolates. This section specifies the type; the reader who only wants
the empirical catalog can stop at Section~\ref{sec:motivation} without
loss.

\paragraph{Why a type system rather than a runtime counter?}
Because the failure the catalog most often records is not a missing
check but a correctly-intended check that races or is bypassed under
concurrency (the M-delegation-fanout shape, 11 rows). A runtime
counter is only as good as the operator's lock discipline; the affine
type makes the racy pattern \emph{fail to compile}. This is the whole
of the type-level claim --- not a stronger cost guarantee, which
remains runtime arithmetic --- and the rest of the paper is careful to
claim no more than this.

\subsection{The Budget API}
\label{sec:budget-api}

The \texttt{Budget} type is a Rust value of type \texttt{u64}
representing remaining quota in micro-cents
($1$~uc $= 10^{-5}$ USD). It exposes four self-consuming methods:

\begin{itemize}
\item \texttt{Budget::new(amount)} --- gated behind a capability token
(\texttt{BudgetMint}, \S\ref{sec:tcb}) so the trusted minting surface is
auditable.
\item \texttt{budget.spend(uc) -> Result<(Budget, ReservationReceipt)>}
--- consumes \emph{self} by value, returns the remainder and a receipt
for post-call refund (\S\ref{sec:refund-semantics}).
\item \texttt{budget.split(left, right) -> Result<(Budget, Budget)>} ---
consumes \emph{self} by value, returns two child budgets summing to
the parent.
\item \texttt{Budget::merge(a, b) -> Budget} --- consumes both arguments
by value, returns the combined budget.
\end{itemize}

All four methods are self-consuming on a non-\texttt{Clone},
non-\texttt{Copy} type. The Rust borrow checker rejects three classes of
cap-circumvention at compile time: aliasing (no \texttt{Clone}),
double-spending (\texttt{spend} consumes \texttt{self}), and use after
delegation (\texttt{split} consumes the parent). Seven distinct rustc
error codes enforce these properties; the full enumeration appears in
Appendix~\ref{app:type-system-spec}.

\subsection{Deployment scope}
\label{sec:approach-scope}

This approach applies to Rust agent code where the operator chooses
the \texttt{Budget} primitive over alternative cap mechanisms.
Section~\ref{sec:limitations} establishes the deployment
context as new Rust agent code, a minority of the 2026
production LLM-agent surface (which is presently dominated by Python
frameworks; the per-framework summary in \S\ref{sec:catalog}).
Table~\ref{tab:decision-matrix}
gives the full decision matrix mapping deployment configurations to
the appropriate cap mechanism (provider-side hard cap, AWS Bedrock
session-level budget action, Token Budgets, or runtime observer); the
affine \texttt{Budget} is recommended only for the cells where its
operational profile (non-bypassability inside typed source code,
pre-flight refusal, $\sim 2\times$ over-reservation under the
\texttt{AdaptiveEstimator}) matches operator priorities.

\subsection{Where the rest of this material lives}

The complete type-system specification (Properties 1--3 on aliasing,
spend-soundness, and delegation-after-split with their corresponding
rustc rejection diagnostics), a step-by-step worked example showing
the approach in a multi-agent delegation, and the type-theoretic
justification for affine rather than linear typing appear in
Appendix~\ref{app:type-system-spec}. Proposition~\ref{lem:cap},
Conjecture~\ref{conj:cap-binary}, and the supporting structural
lemmas appear in \S\ref{sec:reservation} and
\S\ref{sec:split-conj1}; the per-tool obligation breakdown for
the specification-checking cross-checks is in
Appendix~\ref{app:mechanisation} for readers who want it. The main body proceeds
directly to implementation (\S\ref{sec:implementation}) and
empirical evaluation (\S\ref{sec:eval}); a reader who wants to
verify the source-level claims first should consult the appendices
before continuing.

\subsection{Conservative reservation and the cap bound}
\label{sec:reservation}\label{sec:stratified-default}
Before each call the approach reserves an upper bound on the calls
cost---a provider-stratified estimate (byte-length for OpenAI/Groq, a
tool-loop-aware estimator for Anthropic) times a safety margin---and
debits it from the budget via \texttt{checked\_\allowbreak sub}; a call that would
exceed the cap is refused \emph{before} the API request. Under a sound
estimator (assumption A1) and the provider honoring its output cap (A6),
this gives a conditional dollar-cap bound, stated for the abstract
machine below. The bound is enforced by the runtime arithmetic, not by
the type system; the type system supplies the integrity (no aliasing or
double-spend of the reserved amount) that makes the arithmetic
non-bypassable in typed code.

\begin{lemma}[Abstract-machine cap soundness under provider-stratified A1]
\label{lem:cap}
Let $M$ denote the abstract state machine modeling the eight
\texttt{Budget} transitions (\texttt{SpendSuccess},
\texttt{SpendInsufficient}, \texttt{SpendFailPostCheck},
\texttt{Consume}, \texttt{Reserve}, \texttt{ConfirmWithRefund},
\texttt{Forfeit}, \texttt{RefundTo}) over the six-variable
conservation ledger
\begin{align*}
\big(&\,\textit{liveSum},\;
\textit{outstandingReceipts},\;
\textit{outstandingRefunds},\\
&\,\textit{totalCharged},\;
\textit{totalUnrecoverable},\;
\textit{totalReleased}\,\big).
\end{align*}
Let $P$ denote a provider configuration
with estimator $E_P$ selected by
\texttt{select\_\allowbreak for\_\allowbreak provider}$(P)$. Assume:
\begin{itemize}
\item (\emph{A1, $P$-stratified}) For every prompt $p$ transmitted
under $P$, $E_P(p) \geq \mathit{billable\_tokens}_P(p)$, where
$\mathit{billable\_tokens}_P$ denotes the input-token count $P$
uses for billing.
\item (\emph{A2, overflow-free regime}) Every \texttt{Budget} is
constructed with \texttt{micro\_\allowbreak cents} $< 2^{63}$.
\item (\textbf{\emph{A6, output-cap respected}}) For every call to $P$, the
number of output tokens billed by $P$ does not exceed the
caller-supplied \texttt{max\_\allowbreak output\_\allowbreak tokens} parameter:
$\mathit{billed\_output\_tokens}_{P}(\mathit{call}) \leq
\mathit{max\_output\_tokens}$.
\item (\emph{A7, charge-truthfulness}) When a successful call's
reservation is reconciled via
\texttt{ReservationReceipt::\allowbreak confirm}$(\mathit{actual\_charge})$, the
\emph{actual\_charge} value (operator-supplied, typically read from
\texttt{response.usage}) is $\geq$ the amount $P$ actually bills for
the call. A7 is a trust assumption on provider \texttt{usage}
reporting, shared with every client-side cost-accounting mechanism
(LangSmith, LiteLLM proxy budgets, AgentGuard, Helicone); it is not
a property of the affine type system. \S\ref{sec:refund-semantics}
discusses the assumption and the empirical evidence
(pydantic-ai issues \#5445, \#5379, \#5304, \#5302 document
provider-side \texttt{usage} omissions). Note: A7 is dispensable if
the operator opts out of the receipt-refund path and treats each
\texttt{spend} as the final ledger entry; the conservative-margin
default loses tightness but not soundness.
\item (\emph{A8, rate-stability}) The per-token rates
$\rho_{\mathrm{in}}, \rho_{\mathrm{out}}$ used by the operator to
compute reservations match the rates $P$ actually charges, for the
duration of the session. A8 fails if $P$ raises rates mid-session
without operator re-calibration. This is a deployment-time discipline
on tokenizer-and-pricing version pinning, not a property of the
type system; \S\ref{sec:not-verified} names tokenizer-version
stability as the operationally-equivalent residual exposure.
\end{itemize}
Then under A1, A2, A6, A7, and A8, for every reachable state $\sigma$ of
$M$ and every $i \in S$ the estimator satisfies $c_i \leq r_i$, and the
cap-respecting bound
$\sum_{i \in S} c_i \leq \sum_{i \in S} r_i \leq B_0$
holds throughout the execution.
\end{lemma}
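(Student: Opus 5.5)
The proof is an invariant argument over the abstract machine $M$, by induction on the length of an execution. First I will fix notation. $S$ is the set of calls that reach the provider, i.e.\ those for which a \texttt{SpendSuccess} fired. $r_i$ is the amount debited by \texttt{checked\_\allowbreak sub} for call $i$: the estimator's input bound $E_P(p_i)$ priced at $\rho_{\mathrm{in}}$, plus \texttt{max\_\allowbreak output\_\allowbreak tokens} priced at $\rho_{\mathrm{out}}$, times the safety margin. $c_i$ is the amount $P$ actually bills for call $i$.

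The argument splits into a per-call inequality and a global ledger invariant.

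\textbf{Per-call bound $c_i \le r_i$.} Billed cost decomposes as input tokens times the input rate plus output tokens times the output rate. A8 says the rates used to compute $r_i$ are the rates $P$ charges. A1 bounds billed input tokens by $E_P(p_i)$, and A6 bounds billed output tokens by \texttt{max\_\allowbreak output\_\allowbreak tokens}. Monotonicity of the pricing expression, with a margin $\ge 1$, then gives $c_i \le r_i$. This step is local and uses no ledger reasoning.

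\textbf{Conservation invariant.} I will define
\[
I(\sigma) \;:\; \textit{liveSum} + \textit{outstandingReceipts} + \textit{outstandingRefunds} + \textit{totalCharged} + \textit{totalUnrecoverable} + \textit{totalReleased} = B_0 ,
\]
with every component a nonnegative integer. I will check $I$ for each of the eight transitions:
\begin{itemize}
\item \texttt{SpendSuccess} and \texttt{Reserve} move $r_i$ from \textit{liveSum} to \textit{outstandingReceipts}.
\item \texttt{SpendInsufficient} is a no-op on the ledger. \texttt{checked\_\allowbreak sub} returns \texttt{None}, so \textit{liveSum} never goes negative.
\item \texttt{ConfirmWithRefund} splits a receipt $r_i$ into $\textit{actual\_charge}_i$, added to \textit{totalCharged}, and $r_i - \textit{actual\_charge}_i$, added to \textit{outstandingRefunds}. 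This needs a guard $\textit{actual\_charge}_i \le r_i$, which the machine enforces by saturation.
\item \texttt{RefundTo} moves refunds back into \textit{liveSum}.
\item \texttt{Forfeit} and \texttt{SpendFailPostCheck} move receipts to \textit{totalUnrecoverable}.
\item \texttt{Consume} moves them to \textit{totalReleased}.
\end{itemize}
Split and merge preserve \textit{liveSum} because they consume their inputs, and affineness (no \texttt{Clone}) ensures no transition duplicates a term. A2 guarantees that every sum involved stays below $2^{64}$, so the integer arithmetic is exact.

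\textbf{Combining the two.} Each $i \in S$ contributes $r_i$ exactly once to the three receipt-derived accumulators. Reservations returned to \textit{liveSum} by \texttt{RefundTo} are re-debited only by a new call with its own index, so no $r_i$ is counted twice. Summing the nonnegative terms of $I$ gives $\sum_{i\in S} r_i \le B_0$ only after the refunded portions are accounted for. The clean way to do this is to take $r_i$ as the net retained reservation $\min(r_i,\textit{actual\_charge}_i)$ after confirmation, and to use A7 ($\textit{actual\_charge}_i \ge c_i$) together with the per-call bound, so that $c_i$ is still at most this net reservation. Then $\sum c_i \le \sum r_i \le \textit{totalCharged}+\textit{totalUnrecoverable}+\textit{totalReleased}+\textit{outstandingReceipts} \le B_0$ holds in every reachable state.

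\textbf{Main obstacle.} I expect the hardest part to be this refund accounting, namely stating precisely which $r_i$ the final inequality ranges over so that recycled budget is not double-counted. I would first try indexing $r_i$ as the post-confirmation retained amount and proving the stronger invariant $\sum_{i \in S} r_i = \textit{totalCharged}+\textit{totalUnrecoverable}+\textit{totalReleased}+\textit{outstandingReceipts}$. Adding the nonnegative \textit{liveSum} and \textit{outstandingRefunds} to the right-hand side of $I$ then gives the bound. If the operator opts out of the refund path, A7 is unnecessary and the argument goes through with the gross $r_i$.
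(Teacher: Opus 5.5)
Your proof is correct, but it bounds $\sum_{i\in S} r_i$ differently from the paper. The paper's proof works only with the live balance $L$. It asserts that no transition increases $L$ and that each successful \texttt{checked\_\allowbreak sub} lowers $L$ by exactly $r_i$, hence $\sum_{i\in S} r_i = B_0 - L(\sigma_{\mathrm{final}}) \le B_0$. Its pointwise bound $c_i \le r_i$ comes from A1, A6 and A8 exactly as in your first step, and A7 appears only to say that the refund $\rho_i$ satisfies $\rho_i \le r_i - c_i$. You instead verify the full six-variable conservation invariant transition by transition and read the sum bound off the nonnegativity of the other ledger components.

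Your route handles refunds soundly. \texttt{RefundTo} raises \textit{liveSum}, so the paper's monotonicity of $L$, and the identity $\sum_{i\in S} r_i = B_0 - L(\sigma_{\mathrm{final}})$, hold only on the no-refund path. With gross reservations, re-spending a refund makes $\sum_{i\in S} r_i$ exceed $B_0$: reserve $B_0$, confirm a small charge, then reserve the refund again. $\sum_{i\in S} c_i \le B_0$ still holds in that case. Your reindexing to the net retained reservation $r_i - \rho_i$ is exactly what the paper's clause $\rho_i \le r_i - c_i$ needs to close the argument, so it is a repair, not a weakening. The paper's version is shorter and is right as written when the operator opts out of refunds, a case you also cover.

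Three small tidy-ups:
\begin{itemize}
\item Fold the safety margin into $E_P$, since A1 is stated for the selected estimator and no hypothesis makes a separate multiplier $\ge 1$. Also round $r_i$ up to whole micro-cents.
\item State the saturating guard on \texttt{ConfirmWithRefund} as an assumption about $M$, not a consequence of A1--A8.
\item \texttt{Consume}, like a silent drop of an affine \texttt{Budget}, releases live budget rather than receipts. Your ``stronger invariant'' should therefore be an inequality, which is all your final chain uses.
\end{itemize}
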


\paragraph{Reconciliation and refunds}
\label{sec:refund-semantics}
A successful call reconciles its reservation against the
provider-reported charge via \texttt{ReservationReceipt::\allowbreak confirm}
(refunding any over-reservation); a failed call forfeits its receipt.
This path depends on assumption A7 above. The receipt/refund state
machine and its overflow, panic, and cancellation handling are in the
artifact.

\subsection{Binary-level cap soundness: the open obligation}
\label{sec:split-conj1}
The integrity properties hold on well-typed Rust \emph{source}. Whether
they survive compilation---whether \texttt{rustc} codegen and the Tokio
scheduler preserve them on the running binary---is something we neither
establish nor rely on.
\begin{conjecture}[Binary-level cap soundness, open]
\label{conj:cap-binary}
The compiled binary preserves the source-level properties, so
Proposition~\ref{lem:cap} transports to the running program.
\end{conjecture}
\noindent We do not prove this. The claim throughout is source-level;
the evaluation reports \emph{observed} binary-level behavior (zero cap
violations across all runs), not a guarantee. A proof skeleton is in the
artifact for future work.

\section{Evaluation}
\label{sec:eval}

Table~\ref{tab:eval-roadmap} maps the whole evaluation: every
experiment, the claim it tests, its setup, and its headline result.
We refer to experiments by the identifiers E1--E15 introduced there.
Two experiments share the cap $B_0{=}2{,}000$~uc and must not be
conflated: \textbf{E2} is the five-baseline head-to-head on
\texttt{claude-sonnet-4}; \textbf{E4} is the Agent-Contracts
operational-parity comparison on \texttt{claude-haiku-4-5}.

\begin{table*}[!t]
\caption{\textbf{Roadmap of the evaluation.} Each experiment, the
claim it tests, its setup, and its headline result, so the experiments
can be told apart on first read. ``uc'' is micro-cents
($1$~uc${}={}$\$$10^{-5}$). Groups: A compile-time integrity; B
cap-respecting outcome on live API; C what the type system uniquely
adds; D estimator soundness and capital cost; E deployment and overhead.}
\label{tab:eval-roadmap}
\centering
\scriptsize
\renewcommand{\arraystretch}{1.18}
\begin{tabular}{@{}l >{\RaggedRight\arraybackslash}p{3.5cm} >{\RaggedRight\arraybackslash}p{4.9cm} >{\RaggedRight\arraybackslash}p{4.7cm} >{\RaggedRight\arraybackslash}p{1.8cm}@{}}
\toprule
\textbf{ID} & \textbf{Tests} & \textbf{Setup (model, cap, $N$)} & \textbf{Headline result} & \textbf{Where} \\
\midrule
\multicolumn{5}{@{}l}{\emph{A. Compile-time integrity (source-level, no API call)}}\\
E1 & No clone / no double-spend / no use-after-split; capability-gated \texttt{Budget::new} & 9 \texttt{trybuild} compile-fail tests; 7 distinct rustc codes; rustc 1.93.1 & $9/9$ rejected as expected & \S\ref{sec:eval-compile} \\
\addlinespace
\multicolumn{5}{@{}l}{\emph{B. Cap-respecting outcome (live API)}}\\
E2 & Pre-call refusal vs.\ structural and post-call caps & LANG-001 retry loop; \texttt{claude-sonnet-4}; $B_0{=}2{,}000$~uc; $N{=}30$/runtime; $T{=}0$ & 5 baselines $30/30$; TB $0/30$ & Fig.~\ref{fig:headline}, \S\ref{sec:eval-multiruntime} \\
E2$'$ & Cross-provider replication of E2 & gpt-4o-mini, \texttt{claude-haiku-4-5}, llama-3.3-70b; $B_0{=}540$~uc (\$0.0054) & TB max overshoot $0$~uc; structural up to $1395\%$ of cap & App.~\ref{app:multiruntime}, Tab.~\ref{tab:multi-runtime} \\
E4 & Agent-Contracts operational parity at an admitting cap & gpt-4o + \texttt{claude-haiku-4-5}; $B_0{=}2{,}000$~uc & TB-Rust, locked Python, Agent Contracts all $0$ overshoot & \S\ref{sec:eval-ac-discriminating} \\
E5 & Cap-respecting under sampling (independent runs) & $T\in\{0,0.3,0.7,1.0\}$; $N{=}160$; two production-tier models & $0$ violations, $0$ false refusals & \S\ref{sec:eval-temperature-variance} \\
E8 & Sub-floor cap: refusal-to-operate & \texttt{claude-haiku-4-5}; $B_0{=}540$~uc & TB $0/30$ pre-flight refusal vs.\ baseline $30/30$ & \S\ref{sec:eval-multiruntime-anthropic} \\
E9 & Cap-sweep robustness & 10 caps incl.\ $\{540,5000,10000,20000\}$~uc & $30/30$ cap-respecting & \S\ref{sec:eval-production-tier} \\
E11 & Calibrated cap-correctness at scale & 2{,}628 trials; per-call distributions fit to 30 real runs & cap held (arithmetic at scale, not independent obs.) & \S\ref{sec:eval-at-scale} \\
E12 & Live-API session sweep & 382 sessions; pre-flight / mid-loop / self-terminated & $0$ overshoot & \S\ref{sec:eval} \\
\addlinespace
\multicolumn{5}{@{}l}{\emph{C. What the type system uniquely adds (mechanism isolation)}}\\
E3 & Non-bypassability vs.\ operator lock discipline & Forgetful-operator, 5 conditions A--E; \texttt{claude-haiku-4-5}; $B_0{=}60/100$~uc; 3 children; $N{=}30$ & racy A $30/30$; disciplined B--E $0/30$ ($p{=}1.69{\times}10^{-17}$); racy pattern does not compile & \S\ref{sec:eval-forgetful-operator} \\
E6 & Single-agent isolation (M2) & 4-line Python counter vs.\ TB-Rust, same estimator & match at $0/30$ (no single-agent advantage) & \S\ref{sec:eval-m2-isolation} \\
E10 & Multi-agent delegation & concurrent sub-agents under one cap & $0/60$ aggregate, $0/180$ per-child & \S\ref{sec:eval-multiagent} \\
\addlinespace
\multicolumn{5}{@{}l}{\emph{D. Estimator soundness and capital cost}}\\
E7 & A1 estimator hold-outs vs.\ \texttt{count\_\allowbreak tokens} oracle & three hold-outs, $N{=}243$ (cal+hold-out summary $N{=}178$, Tab.~\ref{tab:guarantee-map}) & $0$ soundness violations; $\ge 2.32\times$ safety; up to $9.97\times$ over-reserve & \S\ref{sec:eval-adversarial-holdout} \\
E15 & Capital efficiency / over-reservation & $N{=}5{,}190$ per-call events; three estimators & static $6.20\times$ mean ($2.51\times$ med.); Adaptive $2.11\times$ med.; tokenizer-direct ${\sim}1.0$--$1.1\times$ at $939$--$1749$~ms & \S\ref{sec:capital-efficiency-discussion} \\
\addlinespace
\multicolumn{5}{@{}l}{\emph{E. Deployment and overhead}}\\
E13 & $N{=}1$ production Rust deployment (Rig) & rig-core~0.37; \texttt{claude-haiku-4-5}; \$0.05 cap & single-agent 22 served / 18 refused (\$0.0404 $\le$ \$0.05); 4 concurrent sub-agents \$0.0400 & \S\ref{sec:n1-rig} \\
E14 & Per-operation overhead & Criterion microbenchmark & $<200$~ns/op (observed ${\sim}1.15$~ns) & \S\ref{sec:eval-perf} \\
\bottomrule
\end{tabular}
\end{table*}

\paragraph{Statistical conventions used throughout
\S\ref{sec:eval}} Many sweeps use \texttt{temperature=0} for
determinism. At $T=0$, replicas within a cell are
near-deterministic, so the effective sample size of a single
$N=30$ cell is the number of distinct scheduling interleavings
plus the small stochasticity from network and tokenizer
non-determinism --- substantially below 30. We report two
intervals to make this visible: a \emph{per-run} Wilson 95\%
interval treating each replica as independent (the tightest
interval; reproduces what most empirical-SE papers report at
this scale) and a \emph{per-cell} interval treating each
configuration as one observation (the most conservative).
The headline figures aggregate across at least three distinct
configurations to give the per-cell interval epistemic weight;
single-cell results are flagged as evidence on that
configuration only. The independence assumption is discussed
explicitly in \S\ref{sec:eval-temperature-variance}, where a
$T \in \{0.0, 0.3, 0.7, 1.0\}$ sweep at $N=160$ supplies
genuinely independent runs.

\begin{figure*}[!t]
\centering
\small
\setlength{\tabcolsep}{4pt}
\renewcommand{\arraystretch}{1.35}
\begin{tabular}{l r p{0.45\textwidth} r}
\toprule
\textbf{Runtime} & \textbf{Overshoots} & \textbf{Wilson 95\% CI on overshoot rate} & \textbf{$p$ vs.\ TB} \\
\midrule
LangGraph (\texttt{recursion\_\allowbreak limit=20})
& 30/30
& \textcolor{red!75!black}{\rule[0.05em]{0.26\textwidth}{0.6em}}\,\mbox{[0.886, 1.000]}
& $<5.4\times10^{-15}$ \\
LangGraph + AgentGuard cb
& 30/30
& \textcolor{red!75!black}{\rule[0.05em]{0.26\textwidth}{0.6em}}\,\mbox{[0.886, 1.000]}
& $<5.4\times10^{-15}$ \\
CrewAI (\texttt{max\_\allowbreak iter=5})
& 30/30
& \textcolor{red!75!black}{\rule[0.05em]{0.26\textwidth}{0.6em}}\,\mbox{[0.886, 1.000]}
& $<5.4\times10^{-15}$ \\
AutoGen (\texttt{max\_\allowbreak turns=4})
& 30/30
& \textcolor{red!75!black}{\rule[0.05em]{0.26\textwidth}{0.6em}}\,\mbox{[0.886, 1.000]}
& $<5.4\times10^{-15}$ \\
LiteLLM proxy (post-call)
& 30/30
& \textcolor{red!75!black}{\rule[0.05em]{0.26\textwidth}{0.6em}}\,\mbox{[0.886, 1.000]}
& $<5.4\times10^{-15}$ \\
\midrule
\textbf{Token Budgets (Rust)}
& \textbf{0/30}
& \textcolor{green!50!black}{\rule[0.05em]{0.030\textwidth}{0.6em}}\,\mbox{[0.000, 0.114]}
& --- \\
\bottomrule
\end{tabular}
\caption{Overshoot rate on the LANG-001
multi-step retry-loop workload at $B_0 = 2000$ uc, claude-sonnet-4,
$N = 30$ replicas per runtime, temperature $T = 0$. \textbf{Statistical
note:} at $T=0$ replicas within a cell are near-deterministic, so
the per-replica Wilson 95\% intervals shown treat each replica as
independent and report the tighter (over-confident) bound;
the conservative per-cell reading takes the configuration as the
unit of observation ($N_{\text{eff}}=1$ per runtime). The
cross-temperature consistency at $T\in\{0.0, 0.3, 0.7, 1.0\}$ on
$N=160$ genuinely-independent runs
(\S\ref{sec:eval-temperature-variance}) reproduces the $0/N$
vs.\ $N/N$ split on the approach-vs-baselines axis and is the
primary evidence for the cap-respecting claim;
Fisher's $p<5.4\times 10^{-15}$ on per-replica counts is reported
for completeness but is not what the claim rests on.
Five baselines (LangGraph, CrewAI, AutoGen, AgentGuard, LiteLLM
gateway-proxy) overshoot $30/30$; \textbf{Token Budgets overshoots
$0/30$.} Bars are proportional to the upper Wilson bound on
overshoot rate. The result replicates at the production price
tier (gpt-4o, $\$2.50/\$10$ per Mtok) in
\S\ref{sec:eval-multiruntime-flagship} and across four additional
caps $B_0 \in \{540, 5000, 10000, 20000\}$ uc in
\S\ref{sec:eval-anthropic-cap-sweep}.}
\label{fig:headline}
\end{figure*}

\subsection{Compile-time guarantees}
\label{sec:eval-compile}

Nine \texttt{trybuild} compile-fail tests validate the
resource-accounting integrity properties: each is a small Rust program
that violates one property and passes when \texttt{rustc} rejects it
with the expected diagnostic. The nine exercise \emph{seven} distinct
rustc diagnostics (E0277, E0308, E0382, E0505, E0507, E0599, E0624)
across five granularities --- value-level (use-after-%
\texttt{spend}/\texttt{split}/\texttt{move}), reference-level
(consume-while-borrowed, consume-through-shared-reference),
trait-resolution (\texttt{Send}, absent \texttt{Clone}), typestate
(the \texttt{ReservationReceipt} closure-return contract), and
capability-level (E0624 gating \texttt{Budget::new} behind the
\texttt{BudgetMint} token) --- evidence that the approach is
enforced through multiple independent borrow-checker and type-checker
paths, not one narrow rejection. All nine pass on rustc 1.93.1 stable
(edition 2024) and reproduce via \texttt{cargo test --test
compile\_\allowbreak fail}; per-test rejection output is in the artifact at
\texttt{tests/compile\_\allowbreak fail/}. The rejections occur at compile time,
using the same borrow checker Rust users already trust for memory
safety --- we contribute the design pattern, not new compiler
machinery.

\subsection{Multi-runtime head-to-head (summary)}
\label{sec:eval-multiruntime}

To show the mechanism difference is not specific to LangGraph, we ran
the LANG-001 reproduction across five runtimes spanning the three
enforcement layers of \S\ref{sec:taxonomy}---compile-time (Token
Budgets), runtime-cost (an AgentGuard-style cost callback and LiteLLM
proxy budgets), and runtime-structural (LangGraph
\texttt{recursion\_\allowbreak limit}, CrewAI \texttt{max\_\allowbreak iter}, AutoGen
\texttt{max\_\allowbreak turns})---against a deterministic mock and three live
providers (\texttt{gpt-4o-mini}, \texttt{claude-haiku-4-5},
\texttt{llama-3.3-70b}) at a fixed \$0.0054 cap. Structural counters
are included not as cost comparators but because the catalog shows
operators mis-deploy them as cost proxies (cluster
M-budget-primitive-missing; LANG-001, CRAI-002); their behavior under
a dollar metric measures the size of that gap.

The pattern is consistent across providers (full grid, setup, and
per-cell evidence in Appendix~\ref{app:multiruntime},
Table~\ref{tab:multi-runtime}). Structural counters bound \emph{call
count}, not dollars, and overshoot badly on the more verbose providers
(up to $1395\%$ of cap on Anthropic). Runtime-cost mechanisms---the
AgentGuard-style callback and the LiteLLM proxy---track dollars but
check \emph{after} each call returns, so they admit one overshooting
call before refusing the next. The Rust \texttt{Budget} implementation
refuses every cap-violating call \emph{before} the network request and
never exceeds the cap (maximum overshoot $0$\,uc across all live
cells); a Python behavioral simulation of the same discipline
over-spends ($168\%/153\%$ on Anthropic/Groq) because its coarse
estimator under-reserves---itself evidence that the byte-length
estimator over the full serialized request body (the Rust default) is
the load-bearing implementation choice. The cap-respecting
\emph{outcome} is therefore shared with a correctly-configured
runtime-cost layer; the approach's distinction is pre-call refusal
versus post-hoc observation.

\subsection{Forgetful-operator experiment: what compile-time
integrity uniquely catches}
\label{sec:eval-forgetful-operator}

\S\ref{sec:related-agent-contracts} establishes that Agent
Contracts~\cite{ye-agent-contracts} achieves at runtime the same
cap-respecting outcome this approach achieves at compile time.
This subsection asks what compile-time integrity \emph{uniquely}
catches that the runtime alternative does not, using a minimal
reproduction of the M-delegation-fanout race (cluster
M-delegation-fanout, 11 rows).

\paragraph{What this experiment isolates}
The experiment does not claim that type-system discipline beats
runtime discipline at the cap-respecting outcome. The locked Python
(Condition~B) and the Rust affine conditions (C, D) reach the same
$0/30$: runtime monitoring achieves the outcome whenever the operator
writes the approach correctly, and Condition~B's locked variant is
exactly the M-delegation-fanout fix maintainers post in catalog
threads. What Conditions C and D add is that the same outcome is
mechanically enforced by the type system---the racy pattern of
Condition~A does not compile, confirmed by three companion
\texttt{trybuild} tests with distinct rustc diagnostics. The
distinguishing contribution is therefore \emph{non-bypassability}
within typed source, not the cap-respecting outcome itself.

Conditions A and C differ in both allocation strategy (shared budget
vs.\ split allocation) and integrity layer (none vs.\ compile-time).
Condition~E separates the two: a Rust shared
$\mathit{Arc\langle Mutex\langle Budget\rangle\rangle}$ with
operator-written lock discipline matches Condition~A's shared
allocation yet reaches $0/30$. Both Rust disciplines---split
allocation (C, D) and shared-mutex-with-pre-flight (E)---attain the
outcome, so the integrity layer's distinguishing property is
non-bypassability across both patterns, independent of the allocation
choice (\S\ref{sec:forgetful-tov}).

\begin{table}[t]
\centering
\caption{Five conditions isolate the integrity layer from both the
language and the allocation strategy. Every disciplined condition
reaches $0/30$; only the unguarded racy pattern (A) overshoots. The
discriminator is operator discipline vs.\ compile-time enforcement,
\emph{not} Python vs.\ Rust: a correctly locked Python counter (B) and a
correctly locked Rust \texttt{Arc<Mutex<Budget>>} baseline (E) both reach
the same outcome as the affine conditions (C, D). What C/D add is that
the racy pattern (A) does \emph{not compile}.}
\label{tab:forgetful-conditions}
\setlength{\tabcolsep}{4pt}
\scriptsize
\begin{tabular}{@{}cllc@{}}
\toprule
Cond. & Implementation & Integrity layer & Overshoot \\
\midrule
A & Python racy (no lock) & none & $30/30$ \\
B & Python locked (\texttt{asyncio.Lock}) & runtime discipline & $0/30$ \\
C & Rust affine split ($B_0{=}60$) & compile-time & $0/30$ \\
D & Rust affine split ($B_0{=}100$) & compile-time & $0/30$ \\
E & Rust \texttt{Arc<Mutex<Budget>>} & runtime discipline & $0/30$ \\
\bottomrule
\end{tabular}
\end{table}

\subsubsection{Setup}

Four implementations of multi-child budget enforcement, three
concurrent children per trial, $N=30$ trials each, against
\texttt{claude-haiku-4-5} (\texttt{temperature}\,=\,0 for
determinism). Conditions A--C and E run at parent budget $B_0 = 60$\,uc;
condition D runs at $B_0 = 100$\,uc to demonstrate the approach
admits all children and stays cap-respecting when the cap is sized
appropriately for the workload. Full implementation in the
artifact at
\path{token-budgets-experiments/forgetful_operator/}.

Because \texttt{temperature}\,$=\,0$ makes per-cell runs
near-deterministic, we report the $0/30$ vs.\ $30/30$ splits below as
a mechanism demonstration, not a statistical effect: interval
estimates on these cells reflect the binomial under determinism, and
population-level inference rests on the temperature-stratified $N=160$
sweep (\S\ref{sec:eval-temperature-variance}).

\paragraph{Condition A: Python racy ($B_0 = 60$)} A shared
mutable \texttt{RacyBudget} with no lock; \texttt{can\_\allowbreak admit(estimate)}
checked before the LLM await, \texttt{record\_\allowbreak spend(actual)}
called after. This is the M-delegation-fanout pattern: under
\texttt{asyncio.gather}, the LLM await yields control mid-trial
and sibling children pass \texttt{can\_\allowbreak admit} on the same
pre-spend state before any of them records.

\paragraph{Condition B: Python locked ($B_0 = 60$)} The same
shared budget with \texttt{asyncio.Lock} around an atomic
\texttt{try\_\allowbreak reserve} plus pre-flight reservation (refund after
the LLM call returns). This is the correct operator discipline,
operationally equivalent to Agent Contracts' runtime enforcement,
AgentGuard's in-process callback, and LiteLLM proxy budgets.

\paragraph{Condition C: Rust affine split ($B_0 = 60$)} The
parent \texttt{Budget<10\_\allowbreak 000>} is constructed via
\texttt{BudgetMint::take\_\allowbreak authority} (capability-gated; requires
the \texttt{system-authority} Cargo feature in the binary's
\texttt{Cargo.toml}) and split into three per-child sub-budgets
via \texttt{Budget::split}. Each child receives its own
\texttt{Budget} value moved into a \texttt{tokio::spawn} task.
The type system prevents budget sharing: any attempt to use the
parent after \texttt{split} or to alias a sub-budget across
tasks is rejected at compile time. At $B_0 = 60$, per-child
sub-budget is $20$\,uc, less than the per-child estimate
($31$\,uc); the approach refuses each child at pre-flight.

\paragraph{Condition D: Rust affine split ($B_0 = 100$)}
Identical to condition C, but with parent budget raised to
$100$\,uc. Per-child sub-budget is now $33$\,uc, greater than the
per-child estimate ($31$\,uc); the approach admits each child,
each completes its LLM call, and total spend ($69$\,uc) remains
within the cap.

\paragraph{Condition E: Rust shared
$\mathit{Arc\langle Mutex\langle Budget\rangle\rangle}$
with pre-flight reservation ($B_0 = 60$)} The shared-allocation
Rust baseline that matches condition~B's allocation strategy: a
single \texttt{Budget} wrapped in
\texttt{Arc<tokio::sync::Mutex<\textellipsis>>} is shared across
three \texttt{tokio::spawn}ed children. Each child acquires the
mutex, calls \texttt{try\_\allowbreak reserve(31)}, releases the mutex, makes
the LLM call, then re-acquires the mutex to refund the unused
reservation portion. This is the lock+pre-flight discipline a
careful operator would write in Rust without using
\texttt{Budget::split}; the approach is exactly condition~B's
Python pattern translated to Rust. Implementation:
\path{forgetful_operator/condition_e_rust_shared/src/main.rs}.
Like condition~B, only one child's reservation fits at
$B_0 = 60$\,uc (the first to acquire the mutex; $31+31>60$), so the
expected admit pattern is one-of-three; its role is to separate the
integrity-layer contribution from language --- it is B's lock
discipline written in Rust without \texttt{Budget::split}.

\paragraph{Compile-fail evidence (\texttt{trybuild})} A companion
crate at \path{forgetful_operator/rust_compile_fail/} contains
three Rust translations of the racy Python pattern. All three
\emph{must fail to compile} for the structural claim to hold.

\paragraph{Estimator and per-child accounting} Per-child estimate
$= 31$\,uc (byte-length margin $0.5{\times}$ on a 392-character
prompt plus \texttt{max\_\allowbreak output\_\allowbreak tokens}\,=\,30 at $\$5$/Mtok
output reservation). Actual per-call cost on
\texttt{claude-haiku-4-5} $= 23$\,uc (deterministic at $T=0$) for
conditions A--D's catalog-derived prompt. Condition~E used a
minimal probe prompt (25 input / 5 output tokens, actual cost
$1$\,uc/call); the $31\times$ over-reservation in condition~E is
load-bearing for the 0/30 result and represents the
worst-case-for-the-discipline scenario (a tighter estimate would
admit more children, but the cap-respecting outcome is robust to
this).

\subsubsection{Results}

\textbf{Compile-fail evidence (the structural result):} all three
\texttt{trybuild} tests pass, meaning rustc rejected each Rust
translation of the racy pattern with the expected diagnostic:
\begin{itemize}\itemsep0pt
\item \texttt{shared\_\allowbreak budget.rs} (two children with the same
\texttt{Budget}) $\rightarrow$ E0382, ``use of moved value:
\texttt{budget}''.
\item \texttt{clone\_\allowbreak budget.rs} (\texttt{.clone()} attempt)
$\rightarrow$ E0599, ``no method named \texttt{clone} found for
struct \texttt{Budget}''.
\item \texttt{use\_\allowbreak after\_\allowbreak split.rs} (parent reused after
\texttt{split}) $\rightarrow$ E0382, ``borrow of moved value:
\texttt{parent}''.
\end{itemize}

\textbf{Live-API results:} Table~\ref{tab:forgetful-results}
reports the five conditions. The race manifests in condition A
in every one of 30 trials; the four disciplined alternatives
(B, C, D, E) record zero overshoot in every trial. The split is
categorical and deterministic at $T=0$ --- $30/30$ vs.\ $0/30$ on a
mechanism that does not depend on sampling --- so the contrast is
read off the outcomes directly rather than from a significance test.
(A pairwise Fisher's exact test against A returns
$p = 1.69 \times 10^{-17}$ for each of B, C, D, E, but because the
within-cell replicas are near-deterministic the effective $N$ is far
below 30 and we do not rest the claim on that figure.)

\begin{table*}[!t]
\caption{Forgetful-operator experiment: overshoot rates across
five conditions, $N = 30$ trials each, three concurrent children
per trial, against \texttt{claude-haiku-4-5}, $T = 0$. Conditions
A--C and E use parent budget $B_0 = 60$\,uc; condition D uses $B_0 =
100$\,uc to exhibit the approach's admit-and-stay-safe regime.
Per-replica Wilson 95\% CI on $0/30$: $[0.000, 0.114]$; pairwise
Fisher's exact test against condition A: $p = 1.69 \times 10^{-17}$
for each of B, C, D, and E. Per-cell effective $N$ at $T = 0$ is
below 30 (asyncio scheduling is deterministic on this workload);
these intervals are conservative. Raw data:
\texttt{forgetful\_\allowbreak operator/results/} and
\texttt{forgetful\_\allowbreak operator/condition\_\allowbreak e\_\allowbreak rust\_\allowbreak shared/}.}
\label{tab:forgetful-results}
\centering
\footnotesize
\begin{tabular}{l c c c c}
\toprule
Condition & $B_0$ (uc) & Overshoots & Mean spend & Admit/trial \\
\midrule
A: Python racy (no lock)   & 60  & 30/30 & 69\,uc & 3.0/3 \\
B: Python locked           & 60  & 0/30  & 23\,uc & 1.0/3 \\
C: Rust affine split       & 60  & 0/30  & 0\,uc  & 0.0/3 \\
D: Rust affine split       & 100 & 0/30  & 69\,uc & 3.0/3 \\
E: Rust shared $\mathit{Arc\langle Mutex\langle Budget\rangle\rangle}$\textsuperscript{\dag} & 60 & 0/30 & 1\,uc & 1.0/3 \\
\bottomrule
\end{tabular}
\\[0.4em]
\footnotesize\textsuperscript{\dag}~Condition~E executed with a minimal
prompt (25 input / 5 output tokens, actual cost 1\,uc/call); the
31\,uc reservation matches B's pre-flight estimate. Per-trial admit
pattern (1.0/3) matches Condition~B exactly: shared budget plus
pre-flight lock serializes one acquirer; remaining children refuse
pre-flight regardless of actual call cost. The 0/30 overshoot is
robust to the 31$\times$ over-reservation; the structural
parity-with-B claim is supported on the admit/overshoot dimensions,
with the lower mean spend explained by prompt size rather than
discipline difference.
\end{table*}

\subsubsection{What the experiment establishes}

Each condition's outcome traces to a single cause. \textbf{A}
overshoots ($30/30$) because correct-looking sequential code races
under \texttt{asyncio} scheduling: all three children pass
\texttt{can\_\allowbreak admit} on the same pre-spend state and proceed, spending
$3\times23=69$\,uc against the $60$\,uc cap --- exactly the
M-delegation-fanout shape documented in 11 catalog rows. \textbf{B}
is cap-respecting ($0/30$) because the operator added the missing
discipline by hand (a lock around check-then-act plus pre-flight
reservation) and is, operationally, the runtime alternative ---
Agent Contracts' pre-flight refusal API. \textbf{C} is cap-respecting
($0/30$) for a structurally different reason: the racy pattern
\emph{cannot be written} (three \texttt{trybuild} translations fail
to compile), and the version that does compile refuses each child at
pre-flight because the per-child allocation ($20$\,uc) is below the
estimate ($31$\,uc) --- the same refusal-to-operate behavior as the
sub-floor caps of \S\ref{sec:eval-boundary}. \textbf{D} shows the
discipline is not unconditional refusal: when the cap absorbs
$3\times$ the estimate (per-child $33 > 31$\,uc), all three children
are admitted and complete within cap. \textbf{E} confirms that a
correctly locked \emph{Rust} baseline reaches the same $0/30$ as B,
isolating the integrity-layer contribution from the choice of
language.

The A--E contrasts together establish three claims:
\begin{enumerate}\itemsep0pt
\item Runtime alternatives (B in Python, E in Rust) \emph{can}
achieve the cap-respecting outcome when the operator writes correct
lock discipline (A overshoots $30/30$; B and E, $0/30$).
\item Compile-time integrity (C, D) achieves the same outcome
\emph{without requiring} the operator to write that discipline
(C and D, $0/30$).
\item The approach operates consistently across cap regimes:
refuses when no admissible allocation fits (C), admits when
it does (D), and is always cap-respecting (B, C, D all 0/30).
\end{enumerate}

\subsubsection{Threats to validity for this experiment}
\label{sec:forgetful-tov}
We concede six threats. (1)~\emph{Constructed reproduction}: the racy
code is a minimal reproduction of the M-delegation-fanout pattern, not a
production extract; the catalog's 11 such rows establish recurrence, the
experiment establishes only the race rate at one parameter setting.
(2)~\emph{Mature runtime patterns avoid it too}: actor systems and
capability-secure runtimes prevent the race when correctly applied---the
comparison is against the operator-error baseline most common in the
catalog (shared mutable counter under \texttt{asyncio.gather}), and the
distinguishing claim is only that Rust turns the race into a compile-time
error rather than a property the operator must remember to establish.
(3)~\emph{Parameter-dependence}: a $B_0\in\{50,60,69,100\}$ sweep
(Table~\ref{tab:forgetful-results} and its companion) shows the racy
condition overshoots exactly when $3\times$per-child${}>B_0$, while the
affine and locked disciplines are safe at every cap---the racy code's
safety is contingent on cap-sizing, the approach's is structural.
(4)~\emph{Allocation vs.\ integrity confound}: Conditions A/C vary both
allocation (shared$\to$split) and integrity layer (none$\to$compile-time);
Condition~E (Rust shared \texttt{Arc<Mutex<Budget>>} with operator-written
lock) isolates the integrity layer and reaches $0/30$, so the
distinguishing property is non-bypassability across \emph{both} the
split-then-spawn and shared-mutex patterns, not the allocation choice.
(5)~\emph{Condition~E prompt-size confound, conceded}: E used a minimal
stub prompt ($\sim$1~uc/call) rather than the full LANG-001 prompt of
A--D; the qualitative mechanism-parity conclusion holds (the arithmetic
is invariant) but the matched-prompt re-run remains open.
(6)~\emph{Agent Contracts parity}: it reaches Condition~B's outcome if
its pre-flight API is invoked on every call site---the integrity-layer
distinction is that with our discipline the \texttt{Budget} type is the
only callable interface, so the operator cannot bypass it, demonstrated
by the trybuild evidence (\S\ref{sec:eval-compile}).

\subsection{Threats to validity}

\paragraph{Scope of the claimed contribution}
The binary-level cap-soundness claim is unproven
(Conjecture~\ref{conj:cap-binary}; we estimate $\sim$12
person-months of Iris/RustBelt work to close it), and the Verus
mechanisation (66 obligations, 0 errors under Verus 0.18) has not
been externally audited---its trust base (Z3 translation, SMT
soundness, rustc consistency per VerusBelt) is documented in
\S\ref{sec:tcb}. Neither bears on the empirical contribution: the
catalog and the runtime cap arithmetic under
Proposition~\ref{lem:cap}'s assumptions stand independently of the
open binary-level obligation.

\label{sec:eval-tov}

\paragraph{Validity threats (compact four-fold framework)}
\emph{Internal validity.} The catalog is drawn from public GitHub
issues; initial coding was single-rater. An independent
two-human IRR study ($N=113$ re-annotation, $\kappa = 0.837$; per-class
$\kappa \in [0.727, 0.918]$ with the \texttt{fr}/\texttt{bu} boundary
identified as the codebook's weakest seam) addresses the
rater-independence threat (\S\ref{sec:methodology}). Because the
\texttt{fr}/\texttt{bu} boundary is convention-sensitive
($\kappa_{\mathtt{fr}}{=}0.727$, the lowest per-class figure), the
confirmed/feature-request split should be read as convention-dependent;
the catalog's scope---the invariant union of budget-primitive-missing
issues across 21 frameworks---is unaffected, and we anchor strong
claims there.
\emph{External validity.} The 110-row catalog (63 confirmed incidents + 47 supplementary
entries) is a convenience sample of public English-language failures; closed-source platforms (Cursor,
Replit Agent) are absent. Prevalence claims are anchored on the 63
confirmed incidents, not the full 110. The Rust affine discipline
applies to new Rust agent deployments only --- a minority of the
2026 production ecosystem (the per-framework summary in \S\ref{sec:catalog}); the
Python port provides runtime discipline only. An independent baseline
cohort (\S\ref{sec:baseline-replication}, 20 keyword-neutral GitHub
projects) confirms the mechanism clusters recur (12/20, 60\% coverage)
but is single-coder.
\emph{Conclusion validity.} Microbenchmark variance is reported with
Criterion confidence intervals (run-to-run $\pm 3\%$ on the test
hardware); the head-to-head dollar comparisons use fixed deterministic
token costs to remove LLM nondeterminism from the mechanism comparison.
\emph{Construct validity.} The cap-respecting metric is operationally
defined as ``provider-billed total spend $\leq B_0$ at session end''
under Proposition~\ref{lem:cap}'s assumptions; the
in-program integrity metric is operationally defined as
``the racy multi-child pattern is rejected by the borrow checker at compile time in
typed Rust source code'' under the trybuild evidence
(\S\ref{sec:eval-forgetful-operator}).

\paragraph{Residual exposures}
Beyond the four-fold framework, three deployment-time exposures
survive the approach, each detailed where it arises: the
\texttt{actual\_\allowbreak charge} trust assumption (A7; \S\ref{sec:refund-semantics}),
rate-stability (A8; \S\ref{sec:not-verified}), and the
capital-efficiency trade-off (\S\ref{sec:capital-efficiency-discussion}).
None is closed by the type system.

\subsection{Broader evaluation (summary; full results and tables in the artifact)}
\label{sec:eval-api}\label{sec:eval-langgraph}\label{sec:eval-multiruntime-flagship}%
\label{sec:eval-multiruntime-anthropic}\label{sec:eval-cap-respecting-pressure}\label{sec:eval-ac-discriminating}%
\label{sec:eval-anthropic-cap-sweep}\label{sec:eval-anthropic-multi-workload}\label{sec:eval-multiworkload-sonnet-n30}%
\label{sec:eval-gpt4o-multi-workload}\label{sec:eval-reasoning-models}\label{sec:eval-boundary}%
\label{sec:eval-mid-loop-expanded}\label{sec:eval-production-tier}\label{sec:eval-temperature-variance}%
\label{sec:eval-multiagent}\label{sec:eval-m2-isolation}\label{sec:utility-metrics}%
\label{sec:eval-holdout}\label{sec:eval-adversarial-holdout}\label{sec:eval-prereg}%
\label{sec:eval-tokencap}\label{sec:eval-gateway}\label{sec:eval-provider-caps}%
\label{sec:tokenizer-direct-baseline}\label{sec:capital-efficiency-discussion}\label{sec:head-to-head-enforcement}%
\label{sec:eval-litellm}\label{sec:eval-growth}\label{sec:eval-perf}%
\label{sec:eval-refund-live}\label{sec:eval-cross-provider}\label{sec:eval-utilization}%
\label{sec:eval-adaptive-adversarial}\label{sec:eval-adversarial-a1}\label{sec:eval-stratified}%
\label{sec:eval-anthropic-est-a1}\label{sec:eval-anthropic-adversarial}\label{sec:eval-margin-sensitivity}%
\label{sec:eval-at-scale}\label{sec:eval-stress}\label{sec:loom}%
\label{sec:tcb}\label{sec:python-port-positioning}\label{sec:not-verified}%
Beyond the head-to-head (E2) and forgetful-operator (E3)
experiments above, the artifact reports further evaluation. We refer
to experiments by their Table~\ref{tab:eval-roadmap} identifiers;
full per-cell results for all of them ship in the artifact. Four
bear directly on the paper's claims and are summarized here.

\paragraph{Cap-respecting under independent sampling (E5)}
A temperature-stratified test ($T\in\{0,0.3,0.7,1.0\}$, $N=160$, two
production-tier models) reports zero cap violations and zero false
refusals. Because $T>0$ removes the near-determinism of the $T=0$
cells, this is the genuinely-independent evidence for the
cap-respecting claim; the $T=0$ sweeps are reported for completeness,
not as independent observations.

\paragraph{Operational parity with concurrent work (E4)}
A production-tier head-to-head on \texttt{gpt-4o} and
\texttt{claude-haiku-4-5} at a discriminating cap ($B_0=2{,}000$~uc,
where the cap admits some calls) puts TB-Rust, a properly locked
Python counter, and Agent Contracts~\cite{ye-agent-contracts} at $0$
overshoot with the same admit-then-refuse pattern. At the
cap-respecting \emph{outcome} the type system is therefore at parity
with a correctly-written runtime monitor --- consistent with the
forgetful-operator finding that its distinguishing value is
non-bypassability, not the outcome itself.

\paragraph{Estimator soundness (E7)}
A1 is checked against Anthropic's \texttt{count\_\allowbreak tokens} oracle on
two distinct samples that measure different things and are not
summed: the per-assumption calibration-plus-hold-out set behind the
guarantee map's A1 row (Table~\ref{tab:guarantee-map}, $N=178$), and
a separate, larger family of three independent hold-out sweeps
($N=243$; per-sweep CSVs under \path{refund-live/} and
\path{multiway/} in the artifact). Both report zero
estimator-soundness violations, with at least $2.32\times$ safety on
adversarial corpora.

\paragraph{Single-agent isolation (E6)}
On single-agent cap-respecting, a 4-line Python counter with the same
estimator matches TB-Rust at $0/30$, confirming that the type system
adds nothing to the single-agent outcome; its value is the
multi-agent non-bypassability isolated by E3.

The remaining experiments (E8--E15 in Table~\ref{tab:eval-roadmap})
are confirmatory: the cap holds under a sub-floor cap, a ten-cap
sweep, multi-agent delegation, a 2{,}628-trial calibrated simulation,
382 live sessions, and a live $N{=}1$ Rust deployment
(\S\ref{sec:n1-rig}), at $<200$~ns per operation, with the
capital-efficiency envelope characterized alongside. These, together with the
baseline comparisons (\texttt{tokencap}, gateway, provider per-call
caps, tokenizer-direct) and the Loom and trusted-computing-base
checks, are detailed in the artifact; none changes the paper's claims.

\subsection{Deployment case study: $N{=}1$ on a production Rust agent framework}
\label{sec:n1-rig}

The deployment-impact note (\S\ref{sec:limits-empirical}) observed that none of
the surveyed frameworks is written in Rust. To test whether the discipline
transfers to a real Rust agent runtime rather than a synthetic harness, we
integrated the crate into Rig (\texttt{rig-core}~0.37), a Rust LLM-agent
framework in production use. The integration is a $\sim$40-line adapter: a
shared \texttt{BudgetPool} holds the session dollar cap, and each delegated
call reserves its worst-case cost pre-flight, runs the Rig completion, then
reconciles the actual cost and returns the unspent remainder to the pool. Rig
itself was not modified.

\emph{Single-agent cap enforcement.} On a 40-task workload
(\texttt{claude-haiku-4-5}, \$0.05 cap, per-call output bounded to 100 tokens),
a representative run served 22 tasks and refused 18 \emph{pre-flight} once the
cap was reached, with zero cap overshoot and zero reservation under-counts on
live traffic (final spend \$0.0404 $\le$ \$0.05; the unguarded workload is
projected to cost \$0.073, a $1.5\times$ breach). Measured over-reservation was
$5.64\times$, consistent with the $4$--$6\times$ band reported for the
byte-length estimator (\S\ref{sec:capital-efficiency-discussion}); it is an
estimator-side figure (byte-length estimator on both sides), not an independent
billing measurement, and varies run-to-run with the model's output lengths.

\emph{Multi-agent fan-out (non-bypassability).} The distinguishing claim is not
single-agent cap enforcement --- which a counter also achieves
(\S\ref{sec:eval-forgetful-operator}) --- but that no sub-agent can bypass a
shared budget under delegation. We ran four sub-agents \emph{concurrently}
against one \texttt{BudgetPool} (\$0.05 cap). The pool enforced the cap
\emph{globally}: cumulative spend was \$0.0400 across all four sub-agents
(\$0.05 cap respected, \texttt{invariant\_\allowbreak holds} true throughout), and the
per-sub-agent allocation was first-come-first-served (11/4/6/2 calls served).
The property is enforced at two levels. At runtime, every reservation is
checked against the shared pool, so the global cap binds regardless of fan-out
width --- a deterministic eight-sub-agent stress test of this invariant ships
in the artifact. At compile time, a \texttt{Reservation} (a sub-agent's budget
slice) is affine (move-only): a \texttt{trybuild} suite confirms that cloning it
(\texttt{rustc} \texttt{E0599}) or spending it twice (\texttt{E0382}) is a
compile error, so a sub-agent cannot fabricate or duplicate budget.

We claim no more than a single deployment: one framework, one provider, one
workload. It demonstrates that the discipline composes with a real Rust agent
runtime at low integration cost, enforces a hard session cap soundly across
concurrent sub-agents, and backs non-bypassability with a compile-time
guarantee --- the affine thesis exercised end-to-end rather than in isolation.
It does not establish behavior across frameworks or workloads. The integration
crate, the concurrent stress test, and the compile-fail suite ship in the
artifact (\path{rig-integration/}).

\subsection{Deployment recommendation}
\label{sec:eval-deployment}

The deployment matrix is in \S\ref{sec:when-to-use} (Table~\ref{tab:decision-matrix});
the contexts where the approach is the wrong tool are enumerated
in \S\ref{sec:when-not-to-use}. The empirical evidence above
supports the matrix's positive recommendations (new Rust agent
deployments with cumulative-session cap requirements; capital-tolerant
deployments via the static estimator; prepay-account deployments via
the AdaptiveEstimator) and its negative recommendations (Python-only
deployments, single-provider deployments with server-side caps,
reasoning-model deployments where Token Budgets is a complement to
provider-side controls rather than a replacement).

\section{Related Work}
\label{sec:related}

\subsection{The three-layer enforcement taxonomy}

\S\ref{sec:taxonomy} introduced the three-layer view (compile-time,
software/runtime, transport/network) and placed Token Budgets at the
compile-time layer. Here we relate the work to specific systems at
each layer.

\paragraph{Compile-time layer and concurrent work}
\label{sec:related-agent-contracts}
Concurrent work by Ye and Tan~\cite{ye-agent-contracts}
(arXiv:2601.08815, COINE 2026) introduces \emph{Agent Contracts}: a
formal framework for resource-bounded autonomous AI with multi-dimensional
resource constraints and conservation laws under multi-agent delegation.
Their evaluation reports 90\% token reduction and zero conservation
violations. Agent Contracts and Token Budgets address the same
operational problem (cost-bounded LLM execution) from different
angles: \emph{both} use pre-flight refusal of cap-violating calls
(confirmed empirically in two head-to-head experiments: the
gpt-4o trivial-cap comparison
(\S\ref{sec:eval-multiruntime-flagship}, both frameworks
$0/30$ overshoot at $B_0 = 540$\,uc) and the
\texttt{claude-haiku-4-5} discriminating-cap comparison
(\S\ref{sec:eval-ac-discriminating}, three-way parity with
TB-Rust at $B_0 = 2{,}000$\,uc; all three frameworks admit one
call and refuse the second via pre-flight)).

The two differ in the \emph{integrity layer} that supports that
refusal. Ye and Tan provide an inter-agent contract layer with
runtime cost monitoring; we provide an in-process affine-type
layer with compile-time integrity (budgets that cannot be cloned,
double-spent, or used after delegation). The two are
complementary; a deployment could plausibly use Agent Contracts
at the multi-agent coordination layer and our affine discipline
within each agent. Ye and Tan's COINE 2026 paper predates ours
by approximately four months on arXiv.

The runtime mechanism itself (pre-call reservation with
conservative estimation) is not novel. \texttt{tokencap}~\cite{tokencap}
implements the same pattern as a Python runtime wrapper;
\texttt{LiteLLM} from v1.50 onwards ships virtual-key budgets
with per-request enforcement; Microsoft's \texttt{Semantic
Kernel} exposes \texttt{ITokenizer} for pre-flight token
estimation. OpenAI's API now exposes
\texttt{max\_\allowbreak completion\_\allowbreak tokens} as a server-side hard cap
(2025), which is operationally stronger than any client-side
cap (it cannot be bypassed) but does not support per-agent
granularity or aggregate budgets spanning multiple providers.
Anthropic's prompt caching (2024) further complicates the cost
model: agents that re-use a system prompt see input costs at
$0.1\times$ the non-cached rate, shifting the estimator's
calibration baseline. Our contribution is not the runtime
mechanism but the compile-time integrity layer that none of
these systems provide: budgets that cannot be cloned,
double-spent, or used after being delegated.

The closest prior substructural-resource patterns are
\textsf{tower::Limit}~\cite{tower-rs} (counter behind runtime
check), Tokio time budgets~\cite{tokio-time-budget} (deadline
propagation), and EVM gas metering~\cite{wood-yellowpaper}
(pre-execution reservation, transport-level enforcement). The
affine application pattern itself is established (Move for
blockchain~\cite{blackshear-move}, seL4
capabilities~\cite{klein-sel4},
governor~\cite{governor-rate-limit}, Tokio
semaphores~\cite{tokio-semaphore}). A literature search across
SE, PL, and AI-systems venues 2023--2026 surfaced no prior work
applying compile-time affine or capability typing specifically
to LLM dollar cost. The technique transfer is the contribution;
the underlying substructural-types and capability-resource
patterns are decades-established.

The forgetful-operator experiment
(\S\ref{sec:eval-forgetful-operator}) isolates what compile-time
integrity uniquely catches over Agent Contracts' runtime layer:
the M-delegation-fanout race is rejected by the borrow checker at compile time in
Rust per the \texttt{trybuild} compile-fail evidence, while the
runtime alternative reaches the same cap-respecting outcome only
with correct operator discipline ($30/30$ overshoot for the racy
Python pattern vs.\ $0/30$ for three disciplined alternatives).

\paragraph{Head-to-head update on LANG-001} We re-ran the
Agent Contracts head-to-head on LANG-001 at matched parameters
($N=30$, claude-sonnet-4-5, $T=0$, cap $= 540$ uc) using
ai-agent-contracts v0.3.2. The \texttt{ContractedLLM} context
manager raised an internal state-transition error on every trial
in our Python 3.12 environment; we worked around this by using
the \texttt{Contract}/\texttt{ResourceConstraints} types directly
and enforcing the cap via \texttt{litellm.completion} calls.
Result: \textbf{30/30 pre-flight refusals on Agent Contracts,
30/30 pre-flight refusals on Token Budgets}; zero API calls and
zero overshoots on either side. Estimated per-call cost at
LANG-001 prompt length on Sonnet rates was \$0.003271, $6.06\times$
the cap, so pre-flight refusal is the operationally correct
behavior for any pre-flight discipline at this cap.
At $B_0 = 540$ uc both frameworks tie via refusal-to-operate; a
higher-cap protocol that would discriminate the two mechanisms
(both must admit sub-cap calls and then refuse the cap-violating
call) is pre-committed and partially executed (\textit{higher-cap}
commit, prior to submission): Token Budgets recorded 90/90 within-budget
completion on a self-terminating workload variant; Agent Contracts
recorded 90/90 framework-unavailable due to v0.3.2 API drift
(the \texttt{ResourceConstraints} fields renamed from
\texttt{max\_\allowbreak input\_\allowbreak tokens}/\texttt{max\_\allowbreak output\_\allowbreak tokens} to
\texttt{tokens}/\texttt{cost\_\allowbreak usd}/\texttt{api\_\allowbreak calls}). Per the
pre-committed stopping rules, the first execution is reported as a
\emph{null result for the discriminating-cap protocol}, not as a
finding for or against either framework. We subsequently executed a
corrected harness using the Anthropic \texttt{tools} API to force a
multi-step retry, reported in \S\ref{sec:eval-ac-discriminating}
(the Agent Contracts comparison (artifact)): all three
frameworks (TB-Rust, TB-Python, Agent Contracts) record 0 overshoot
at the discriminating cap $B_0 = 2{,}000$\,uc, Fisher's exact
$p = 1.0$ on every pairwise comparison. Full protocol, harness, and
per-trial CSVs in the public artifact.

\paragraph{Software/runtime layer} Production runtime cost guards
include AgentGuard-style budget callbacks~\cite{agentguard},
paperclipai's monthly-budget feature, the proposed nanobot
\texttt{maxCostPerMessage}, and LangGraph's
\texttt{checkpointer}-callback recipe. Each of these checks
\emph{after} a call has been issued and admits one overshooting call
per session.

\paragraph{Adjacent literatures positioned}
The approach sits at the intersection of three established lines:
(i) substructural and resource-aware typing
(Wadler's linear types~\cite{wadler-linear};
RAML/AARA~\cite{hoffmannRAML,wang-prob-resource};
Linear Haskell~\cite{bernardy-linear-haskell}; quantitative type
theory~\cite{atkey-qtt}; Liquid Haskell's refinement
types~\cite{vazou-liquid-haskell} can express bounded numeric
invariants such as $\{n \mathrel{:} \textsf{Int} \mid n \le \textsf{cap}\}$ at
the type level, the closest type-system precedent to expressing
cumulative-cost bounds, though it lacks Rust's affine
delegation-across-boundaries property) --- our discipline is
structurally weaker
(a single capability value, runtime cap arithmetic) but applied to
a new domain (LLM dollar cost) where post-hoc external pricing
prevents intrinsic bound derivation;
(ii) capability-based authority and ocap (KeyKOS~\cite{keykos},
EROS~\cite{eros}, seL4~\cite{klein-sel4}, CHERI~\cite{cheri},
Joe-E~\cite{joe-e}, Pony~\cite{clebsch-pony}, Capsicum~\cite{capsicum-watson})
--- \texttt{BudgetMint} is a direct application: a non-forgeable
handle whose construction is gated by a feature-flag-enabled
authority; the closest production analogue is
\texttt{tower::Limit}~\cite{tower-rs} (a runtime counter), which
we lift to compile-time within the trust boundary of
\texttt{Budget::new};
(iii) smart-contract gas metering and pre-flight reservation
(EVM gas~\cite{wood-yellowpaper}, KEVM~\cite{kevm},
GASTAP~\cite{gastap}, MadMax~\cite{madmax}, Move's bytecode
verifier~\cite{blackshear-move}, CosmWasm/NEAR/Solana compute
units~\cite{cosmwasm-gas,near-gas,solana-compute-units}) --- the
structural difference is the locus of the cost function: in
gas/fuel, the cost is intrinsic and pre-computable; in our setting,
the cost is determined post-hoc by an external party (provider's
tokenizer + pricing), requiring the estimator-based reservation
pattern of \S\ref{sec:reservation}. KEVM and GASTAP close the
analogous source-to-binary refinement proof that
Conjecture~\ref{conj:cap-binary} leaves open; we estimate
a substantial refinement-proof effort to reach equivalent
guarantees.

\paragraph{Cloud cost governance and provider-tier spending controls}
AWS Budgets~\cite{aws-bedrock-budgets} with budget actions enforce
per-account caps at the provider tier with automatic service
revocation on threshold breach; AWS Bedrock additionally exposes
session-level budget actions and Bedrock Guardrails for prompt-,
response-, and topic-level filtering. OpenAI's organization-tier
spending limits (introduced 2024, exposed via Settings $\to$ Limits
and the Usage API) and Anthropic's per-workspace spend limits play
the same role on their platforms. GCP Billing Budget
API~\cite{gcp-billing-budget} and Azure Cost Management offer
analogous account-level quotas. These provider-side mechanisms are
operationally stronger than any client-side discipline within their
scope (they cannot be bypassed by client code) but are strictly less
granular: they cannot enforce per-agent budgets within a single
session, or aggregate caps spanning multiple providers, or pre-flight
refusal at sub-session granularity. Conceptually the pre-flight
reservation pattern is admission control under a resource quota: the
same shape as cluster-manager reservation accounting
(Borg~\cite{verma-borg}, Kubernetes resource quotas~\cite{k8s-quota}),
fair-share and rate schedulers (dominant resource
fairness~\cite{ghodsi-drf}, mClock~\cite{gulati-mclock}), and
token-bucket admission at API gateways
(Stripe~\cite{tarjan-stripe-ratelimit}, Envoy~\cite{envoy-rls},
Istio~\cite{istio-quotas}, Kong~\cite{kong-ratelimit}, and the
classical token bucket~\cite{turner-token-bucket}). The transfer to
LLM cost is not the admission mechanism --- which is decades old ---
but the locus of the resource estimate: cluster and network quotas
meter a resource whose consumption is known at admission time, whereas
LLM dollar cost is fixed post-hoc by an external tokenizer and price
list, which is why the reservation must be a conservative
\emph{estimate} (\S\ref{sec:reservation}) rather than an exact debit.
The recommended-production pattern
is layered: provider-side caps as the outer wall, plus per-agent
in-process enforcement (this paper's contribution within Rust, or
runtime alternatives like AgentGuard/LiteLLM in other languages) as
the inner layer. The decision matrix (Table~\ref{tab:decision-matrix})
places this approach in the cells where provider-side mechanisms
are absent or insufficient.

\paragraph{Transport/network layer} ATXP~\cite{atxp} returns
HTTP 402 Payment Required when an agent's wallet depletes,
delegating cost enforcement to a payment-aware HTTP gateway. The
guarantee is wallet-level, not call-level: the gateway cannot prevent
overshoot within an in-flight request.

The three layers compose: an operator can deploy compile-time
discipline (this work) with a runtime budget guard for in-process
sanity checks and a transport-layer wallet for organization-level
quotas.

\subsection{Adjacent work}
\label{sec:adjacent-work}
We position the approach against the adjacent literatures whose
techniques it transfers to post-hoc-priced LLM cost.
\emph{Substructural and resource-aware typing}---linear
types~\cite{wadler-linear,bernardy-linear-haskell}, AARA
(RAML~\cite{hoffmannRAML}, probabilistic
AARA~\cite{kaminski-expected-runtime,wang2020raml-tail}), quantitative
type theory~\cite{atkey-qtt}, session types~\cite{honda-session-types},
typestate (Plaid~\cite{aldrich-typestate,sunshine-plaid},
Obsidian~\cite{coblenz-obsidian}, Vault~\cite{deline-vault}), graded
modalities (Granule~\cite{orchard-granule,gaboardi-grading}), and
refinement-type Rust (RefinedRust~\cite{refinedrust},
RustHorn~\cite{rusthorn})---all encode resource discipline in the type
system. Our \texttt{Budget} is a degenerate-typestate case (two states,
\texttt{live}/\texttt{moved}) in stock Rust, structurally weaker than
graded resources but specialised to a domain where post-hoc external
pricing precludes intrinsic bound derivation; we use Verus~\cite{verus}
(on VerusBelt~\cite{hance2026verusbelt}) for the source-level
mechanisation because its integer obligations discharge under SMT.
\emph{Linear assets and capability security}---Move's bytecode-verified
linear assets~\cite{blackshear-move} are stronger (a dropped value is a
verifier error) but require a custom VM; our affine relaxation forfeits
an unspent balance silently, the stock-Rust compromise. The
ocap tradition (KeyKOS~\cite{keykos}, EROS~\cite{eros},
seL4~\cite{klein-sel4}, CHERI~\cite{cheri}, Joe-E~\cite{joe-e},
E~\cite{miller-e}, Pony~\cite{clebsch-pony}) motivates
\texttt{BudgetMint}, which lifts the runtime-counter pattern of
\texttt{tower::Limit}~\cite{tower-rs} to a compile-time gate.
\emph{Gas metering}---EVM/KEVM/GASTAP/MadMax~\cite{wood-yellowpaper,kevm,gastap,madmax}
supply the source-to-binary refinement Conjecture~\ref{conj:cap-binary}
leaves open, but price cost \emph{intrinsically}; ours is fixed
post-hoc by the provider tokenizer (hence the estimator-based
reservation of \S\ref{sec:reservation}). The verified-systems refinement lineage is the closer
methodological analogue for the binary obligation; the residual gap is
that A1/A6/A7/A8 stay external trust assumptions no proof can close.
\emph{LLM cost tooling}---production gateways (LiteLLM proxy
budgets~\cite{litellm-proxy-budgets}) and observability
platforms (Langfuse~\cite{langfuse}, Braintrust~\cite{braintrust})
observe \emph{after} the call; \texttt{tokencap}~\cite{tokencap} is the
closest in-process pre-flight comparator
(\S\ref{sec:eval-tokencap}); FrugalGPT~\cite{frugalgpt} contributes a
per-prompt predictor complementary to our integrity layer; DSPy's
``compilation''~\cite{khattab-dspy,opsahl-ong-miprov2} is
program-synthesis, not a runtime cap (the DSPY-001/003 incidents,
\S\ref{sec:catalog-composition}), and composes with an affine
\texttt{Budget}. \emph{Provider-side and quota infrastructure}---AWS
Bedrock session budgets, OpenAI/Anthropic org caps, and
gateway/cloud limiters (Envoy~\cite{envoy-rls},
Stripe~\cite{tarjan-stripe-ratelimit},
Kubernetes~\cite{k8s-quota}) are kernel-enforced and thus
operationally \emph{stronger} where they apply, but coarser
(account-/org-level, single-provider); the in-process affine layer is
the inner wall for the per-session, cross-provider deployments they do
not reach---a complement, not a replacement.

\paragraph{Adversarial cost amplification and agent operational
safety} Two adjacent threat surfaces sit next to the benign overruns
the catalog documents. The first is economic denial-of-service: an
adversary (or a compromised upstream tool) deliberately amplifies
token consumption---``denial-of-wallet'' patterns and
prompt-injection-driven tool-call inflation---to run up a victim's
bill. The second is the broader literature on operational safeguards
for autonomous agents (loop and recursion bounds, kill switches,
human-in-the-loop gating), which the upstream failure-mode surveys we
draw on in \S\ref{sec:motivation} situate within agent reliability
rather than cost. Our discipline is not an adversarial defense: a
caller with access to \texttt{Budget::new} or operating outside the
Rust trust boundary can mint or sidestep budgets, and an estimator
calibrated on benign prompts can be driven outside its margin by
crafted inputs (\S\ref{sec:estimator-load-bearing}). What the approach
does provide against both surfaces is consequence-bounding: whatever
the trigger, in-program spend cannot exceed $B_0$ under
Proposition~\ref{lem:cap}'s assumptions. We therefore position cost
caps as orthogonal to, and composable with, adversarial-input
defenses and agent-safety controls rather than as a substitute for
either.
\section{Discussion and Limitations}
\label{sec:limitations}

\subsection{Failure modes not addressed by this discipline}
\label{sec:failure-modes-not-addressed}

Six failure modes lie outside the approach's coverage, collected
here as a single reference list. The integrity claim and
Proposition~\ref{lem:cap} are silent on every item below; operators
deploying Token Budgets retain residual exposure to each.

\begin{enumerate}\itemsep1pt
\item \textbf{Provider billing misreport.} The cap-respecting
bound silently undercounts spend when a provider's \texttt{usage}
field is incomplete, because \texttt{ReservationReceipt::\allowbreak
confirm} accepts an operator-supplied \texttt{actual\_\allowbreak charge}
without independent verification (pydantic-ai \#5445, \#5379,
\#5304, \#5302). Detail: \S\ref{sec:refund-semantics}.
\item \textbf{Reasoning-model hidden tokens.} A6 is structurally
violated: providers bill for thinking tokens not bounded by
\texttt{max\_\allowbreak output\_\allowbreak tokens}. Operators must use provider-side
controls (\texttt{reasoning\_\allowbreak effort}, \texttt{thinking.\allowbreak budget\_\allowbreak tokens})
and recalibrate the session-cumulative reservation per
configuration. Detail: \S\ref{sec:limits-reasoning}.
\item \textbf{Canceled-stream partial usage.} When a streaming
response is canceled mid-flight, the terminating \texttt{usage}
event may never reach the client; \texttt{ReservationReceipt::\allowbreak confirm}
sees an undercount. The approach cannot detect this from client
state alone. Detail: \S\ref{sec:limits-reasoning}.
\item \textbf{Tokenizer-version drift.} Mid-session provider
tokenizer changes invalidate calibration without warning; a
deployment that does not pin tokenizer versions in build metadata
can silently lose A1 between releases. Detail: \S\ref{sec:not-verified}.
\item \textbf{Server-side prompt rewriting.} The provider may
inject system text after the client sends the request
(tool-description expansion, cached-context expansion);
\texttt{AnthropicEstimator} captures the dominant case empirically,
but adversarial server-side rewriting is not ruled out by any
invariant. Detail: \S\ref{sec:eval-utilization}.
\item \textbf{Multi-tenant cross-process budgets.} The affine
\texttt{Budget} lives in one Rust process; multi-replica budget
arithmetic requires the distributed reservation service sketched
but not implemented in this work. Detail:
\S\ref{sec:limitations-distributed}.
\end{enumerate}

The six rows above are the operationally significant exposures a deployment
retains under this approach. Section~\ref{sec:not-verified} gives the
complementary list of \emph{formal} obligations the mechanised specifications
do not discharge (binary-level refinement, LLVM/rustc correctness, provider
billing semantics, network nondeterminism, tokenizer evolution, reasoning-model
hidden tokens, multi-tenant cross-process budgets).

\subsection{Estimator soundness as the principal dependency}
\label{sec:estimator-load-bearing}
The approach's binary-level cap-respecting behavior rests far more
heavily on estimator soundness (A1) than on the affine type machinery.
The cap-respecting end-to-end claim decomposes into two independent
properties: the \textbf{integrity property} (compile-time, type-system
enforced; once a \texttt{Budget} value is constructed with capacity $M$,
no path through the typed source code can spend, split, or merge more
than $M$ accounted micro-cents) and the \textbf{cap-respecting property}
(operational, estimator-dependent; the relationship between accounted
micro-cents and provider-billed micro-cents is the estimator's job,
governed by the chain billed $\leq$ rate $\times$ billable\_tokens $\leq$
rate $\times$ estimator(prompt) = reserved\_uc).

A deployment whose actual prompt distribution includes patterns the
calibration corpus did not cover --- adversarial nested-tool-schemas
beyond the audit, novel provider-side prompt-rewriting machinery, or a
future tokenizer rotation increasing the worst-case byte-to-token ratio
above 2.0$\times$ --- can experience binary-level overshoot even on a
Rust binary that passes Verus verification and the trybuild suite. The
compile-time machinery does not catch this. The affine discipline still
matters because estimator-only deployments (\texttt{tokencap},
AgentGuard, LangSmith, LiteLLM proxies) have to solve both integrity and
cap-respecting at runtime, typically through ad-hoc Python wrappers. Our
$N=30$ head-to-head against \texttt{tokencap} (\S\ref{sec:eval-tokencap})
shows what happens when the integrity layer is missing: \texttt{tokencap}
achieves its design target on token tracking but 30/30 dollar overshoot
at every cap because its enforcement window admits the call that pushes
cumulative spend over the cap. \textbf{Layer composition.} Token Budgets
combines a static layer (linear ownership, no quantitative resource
analysis) with an empirical layer (audited estimator margin). The
operational guarantee is no
stronger than the weaker of these two layers; the contribution is that
combining them strictly dominates either alone on the operational metric
(\$ cap-respecting on multi-step agent workloads) that the catalog
identifies as the binding failure mode.

\subsection{When Token Budgets is not the right choice}
\label{sec:when-not-to-use}

The decision matrix in \S\ref{sec:when-to-use}
(Table~\ref{tab:decision-matrix}) names the contexts where this
discipline is preferred. The mirror question --- the contexts
where it is the wrong tool --- collapses to six dimensions, each
keyed to a row of the matrix.

\textbf{Language.} The approach's compile-time integrity property
requires Rust's affine ownership. Python-only deployments (the bulk
of the catalog's framework distribution,
the per-framework summary in \S\ref{sec:catalog}) should use the existing runtime
mitigations (AgentGuard, LiteLLM virtual-key budgets) plus the
experimental Mypy plugin POC
(\S\ref{sec:supplementary-extensions}); full compile-time enforcement
would require a Rust-language adoption or a production-grade Python
static-analysis plugin beyond the POC currently shipped.
\textbf{Provider-side caps available.} Where the deployment is
single-provider and the provider already exposes session-level
cumulative caps (OpenAI \texttt{max\_\allowbreak completion\_\allowbreak tokens}, Anthropic
per-workspace caps, AWS Bedrock session-level cost controls), those
controls are kernel-enforced and operationally stronger than any
client-side discipline.
\textbf{Reasoning models.} OpenAI o-series, Anthropic extended-thinking,
DeepSeek-R1, and Gemini Thinking structurally violate A6: providers
bill for thinking tokens not bounded by \texttt{max\_\allowbreak output\_\allowbreak tokens}.
Operators should use provider-side mechanisms
(\texttt{reasoning\_\allowbreak effort}, \texttt{thinking.\allowbreak budget\_\allowbreak tokens}) as the
primary per-call control, optionally combined with
\texttt{Budget::spend\_\allowbreak with\_\allowbreak reasoning} for session-cumulative
budgeting on top (\S\ref{sec:limits-reasoning}).
\textbf{Micro-budget regimes.} At budgets below the per-call worst-case
reservation, pre-flight refusal degenerates to denial-of-service for
legitimate small tasks. Tokenizer-direct estimation
(\S\ref{sec:tokenizer-direct-baseline}) recovers capital efficiency at
$\sim$1{,}000\,ms per-spend latency.
\textbf{Multi-tenant cross-process.} \texttt{Budget} lives in one Rust
process. Multi-replica budget state requires the distributed
reservation service sketched in \S\ref{sec:limitations-distributed}
but not implemented here.
\textbf{Capital-cost-sensitive deployments.} The static byte-length
estimator records $6.20\times$ mean over-reservation, which on prepay
accounts is real locked capital. Operators for whom this trade-off is
unacceptable should adopt tokenizer-direct estimation ($\sim$100\%
capital efficiency at the latency cost) or provider-side caps.

\subsection{Where the discipline rejects valid programs (false positives)}
\label{sec:false-positives}
In deployments where the approach is the right choice, what classes
of legitimately-correct programs does it nevertheless reject? Five
classes: \textbf{(F1)} programs whose actual spend depends on a future
condition the static analysis cannot resolve (manifests as
capital-efficiency loss, 6.2$\times$ median over-reservation, not as
refused calls); \textbf{(F2)} programs that legitimately consume budget
across hidden re-entry (recursive agent patterns require either
worst-case reservation at the outermost frame or refactor to iterative
loops); \textbf{(F3)} programs that legitimately defer budget commitment
across asynchronous task boundaries (no ``conditional reservation that
materialises later''; workarounds via \texttt{BudgetPool} with
closure-based reservation recover cost-zero-unspent at the cost of a
closure-shaped API); \textbf{(F4)} programs that legitimately share
budget across independent agents (the broadcast-cost-once pattern from
ATGN-018 cannot share a single \texttt{Budget} instance; \texttt{BudgetPool}
provides explicit-coordination workaround); \textbf{(F5)} programs whose
authors prefer trust-the-runtime to static enforcement (legitimate
calibration of soundness--utilization trade-off; explicitly out of scope,
\S\ref{sec:when-not-to-use}). None of F1--F5 is an unsoundness; all are
utility losses paid in exchange for the integrity property. For
deployments where catalog failure-mode costs dominate F1--F5, the
trade-off favors the approach; for converse cases, F1--F5 are
dispositive reasons to choose a different mechanism.

\subsection{What the discipline does not solve}
\label{sec:not-solved}
Four structural limitations are detailed elsewhere and summarized here
only to keep the boundary in one place: multi-tenant cross-process
budgets (\S\ref{sec:limitations-distributed}), reasoning-model hidden
tokens (\S\ref{sec:limits-reasoning}), Python and TypeScript ecosystems
that require runtime-only ports (\S\ref{sec:when-not-to-use}), and the
trusted \texttt{Budget::new} constructor (\S\ref{sec:tcb}). One
limitation is specific to the choice of affine over linear typing: an
unspent \texttt{Budget} is forfeited on \texttt{Drop} rather than
statically required to be resolved, because Rust provides affine, not
linear, types. Move's bytecode verifier offers the stronger
must-resolve guarantee at the cost of a custom VM
(\S\ref{sec:why-affine}); we accept the weaker guarantee as the
stock-Rust compromise.

\subsection{Empirical methodology limitations}
\label{sec:limits-empirical}

Three limitations on the empirical contribution are conceded
explicitly, beyond the per-experiment threats reported in
\S\ref{sec:eval-tov}.

\textbf{The catalog documents recurrence, not ecosystem
prevalence.} The 63 confirmed incidents establish that the
budget-overrun failure class recurs in the 21 sub-projects
identified by our keyword-driven sampling protocol, not that it
is necessarily prevalent across the wider LLM-agent ecosystem.
A denominator-based prevalence study (incidents per active user,
per KLOC, per project-age, or incidents-per-feature-request)
would require ecosystem telemetry access we do not have; it is
identified as catalog-v2 follow-up. Reviewers and replicators
should read the catalog as a recurrence proof for the named
sub-projects and frameworks, not as a prevalence estimate for
the population of all LLM-agent code in deployment.

\textbf{The eight-category mechanism taxonomy is exploratory.} A
blind second-rater pass over all 110 rows gives moderate
cluster-assignment agreement (Cohen's $\kappa=0.44$, 95\% CI
$[0.34,0.55]$, $N=110$). Two mechanisms are reliably
identified---cost-observability ($\kappa=0.78$) and
multimodal-cost-amplification ($\kappa=0.65$)---while the remaining
boundaries overlap. The disagreements concentrate where an incident
genuinely exhibits more than one mechanism (for example, a
``disable retry on timeout'' request whose ignored \texttt{max\_\allowbreak retries}
option makes it at once a retry-loop and a dropped-provider-option
case) and where the single-agent/multi-agent line between retry-loop
and delegation-fanout is fine. We therefore present the eight clusters
as a descriptive organization of the corpus rather than a validated
taxonomy, and we do not rest quantitative claims on precise
per-cluster counts; the four-class confirmation labeling, by contrast,
is IRR-validated ($\kappa=0.837$). The case-type
codebook (\S\ref{sec:catalog-protocols}) was finalized after
all retained issues were classified at the
\texttt{bug-fixed}/\allowbreak\texttt{bug-unfixed}/\allowbreak\texttt{feature-request}/\allowbreak\texttt{borderline} level (the level at which the headline two-coder
IRR $\kappa = 0.837$ is reported; the confirmed-bug subset reaches
$\kappa = 0.943$). The further partition of the
110 retained rows into the eight mechanism clusters
(M-delegation-fanout, M-retry-loop, M-context-amplification, etc.)
was single-rater, performed without an independent re-coding pass.
We therefore present the eight mechanism clusters as a single-coder
interpretive synthesis intended to organize the catalog by
proximate cost mechanism, \emph{not} as a reliability-tested
per-row classification instrument. We do not report inter-rater
agreement on cluster assignment, and we do not claim the eight
clusters are exhaustive or hierarchically optimal. Consistent with
this scoping, the cataloguing notes record eleven rows with genuine
cross-cluster character (e.g.\ a delegation-fanout incident that is
also context-amplification), which we take as direct evidence that
the mechanism boundaries are soft rather than crisp; a forced
single-label coding would understate that softness. A second-rater
pass over the incidents with the cluster codebook as the label
space is identified as catalog-v2 follow-up; it would convert the
taxonomy from an organizing device into a validated instrument
without changing the four-class IRR already reported.

\textbf{No deployment study; the evaluation is synthetic and
live-API.} The empirical evaluation comprises a 63-incident
catalog, a six-runtime head-to-head, a temperature-stratified
live-API sweep, the M2 isolation experiment, the
Forgetful-Operator experiment, and the Agent Contracts
discriminating-cap head-to-head. None of these is a longitudinal
deployment study; we do not report incident-reduction data from
a real operator adopting the approach, developer-usability
metrics, or maintenance-burden analysis. The crate is deployed
in internal non-production systems but not in production
infrastructure at any third-party organization we can name. A
twelve-month deployment study with a partnering organization is
identified as the highest-leverage follow-up for strengthening
the operational claim; in the meantime the paper's claim is
bounded to ``the approach prevents the
M-delegation-fanout race in synthetic and live-API
experiments,'' not ``the approach reduces production
incidents.''

\textbf{Assumption A7 (provider \texttt{usage} truthfulness):
fault-injection results.} Proposition~\ref{lem:cap} is proven
conditional on A7: the provider's reported \texttt{actual\_\allowbreak charge}
on a successful call truthfully bounds the operator's spend. The
catalog itself documents four pydantic-ai incidents
(\#5445, \#5379, \#5304, \#5302) where the \texttt{usage}
field is missing or wrong, so A7 is empirically known to fail in
production. To quantify the consequence, we ran a fault-injection
study simulating a provider that under-reports usage by a factor
$k$. We bootstrapped 1{,}000 real \texttt{(reservation,
actual-cost)} pairs from our live-API corpus --- on which A1 holds
for every pair (reservation $\geq$ actual on all 1{,}000; mean
effective margin $1.64\times$) --- sampling the pairs jointly so
the estimator's conservativeness is preserved and only A7 is
perturbed. We ran 1{,}000 sessions per condition at $B_0 =
2{,}000$\,uc (Table~\ref{tab:a7-fault}).

\begin{table}[t]
\centering
\small
\caption{A7 fault injection: provider under-reporting by factor
$k$ vs.\ cap-respecting behavior. 1{,}000 sessions per row,
$B_0 = 2{,}000$\,uc, bootstrapped from 1{,}000 real
\texttt{(reservation, actual)} pairs (A1 holds on all). At $k=1$
(truthful provider) the approach is cap-respecting, confirming
Proposition~\ref{lem:cap} under its stated assumption.}
\label{tab:a7-fault}
\begin{tabular}{rrrr}
\toprule
$k$ & overshoot & mean over cap & max over cap \\
\midrule
$1.0$ (truthful) & $0/1000$ & $0.0\%$ & $0.0\%$ \\
$2.0$ & $666/1000$ & $13.9\%$ & $39.3\%$ \\
$5.0$ & $1000/1000$ & $137.9\%$ & $172.0\%$ \\
$10.0$ & $1000/1000$ & $354.4\%$ & $395.4\%$ \\
\bottomrule
\end{tabular}
\end{table}

At $k=1$ the approach is cap-respecting in all 1{,}000 sessions
(0 overshoot, 95\% CI $[0.000, 0.004]$), confirming
Proposition~\ref{lem:cap} holds exactly when A7 holds. Under-reporting
degrades this sharply and \emph{undetectably}: at $k=2$,
666/1{,}000 sessions overshoot (mean $13.9\%$ over cap); at
$k=5$ and $k=10$, every session overshoots (mean $137.9\%$ and
$354.4\%$ respectively). The approach cannot detect the
violation because the ledger observes only reported charges, so
it admits calls whose true cost has already exhausted the budget.
A periodic reconciliation layer that polls ground-truth billing
and corrects the ledger substantially mitigates this. In a
matched run at $k=5$ (baseline $1000/1000$ overshoot, mean
$138.0\%$, max $171.3\%$), reconciling every three calls reduced
the overshoot rate to $593/1000$ and the mean magnitude to
$22.9\%$ (max $52.6\%$), bounding the damage to roughly one
reconciliation window. We do not ship
reconciliation in the current crate; the simulation establishes
it as a concrete mitigation path. This confirms A7 as a genuine
trust boundary shared with every client-side cost-accounting
mechanism in the catalog; the approach is a best-effort layer
against an honest provider, not a guarantee against a Byzantine
one, and deployments that cannot trust \texttt{usage} reporting
must reconcile against billing out-of-band.

\textbf{Dependency-tree unsafe surface is not quantified.}
\texttt{\#[forbid(unsafe\_\allowbreak code)]} applies to the workspace
root, not to transitive dependencies. A typical Rust agent
project has 100+ transitive dependencies, any of which could
forge a \texttt{Budget} via \texttt{mem::transmute} or similar
within its own \texttt{unsafe} blocks. The
\texttt{BudgetMint} allowlist pattern moves the trust
boundary to a small named version-controlled file, but we
do not report a quantitative \texttt{cargo-geiger} audit of
the dependency tree's \texttt{unsafe} usage; this is identified
as a follow-up. Production deployments should run
\texttt{cargo-geiger} and an SBOM audit before relying on the
compile-time integrity claim, and should treat the
\texttt{BudgetMint} allowlist as the actual trust boundary
rather than \texttt{\#[forbid(unsafe\_\allowbreak code)]} at the workspace
root.

\textbf{Workload diversity is limited to retry-loop patterns.}
The empirical evaluation exercises three retry-loop workloads
(LANG-001 retry-after-error, clarification, argument-hallucination).
RAG pipelines, planning agents with tree-of-thoughts expansion,
multi-modal agents, and long-context document summarization are
not evaluated. The approach's mechanism (pre-flight reservation
under a sound estimator) is workload-independent in principle,
but the empirical claim is bounded to the retry-loop family until
a non-retry-loop workload is run. We identify a RAG-pipeline
evaluation as the highest-priority workload extension.

\subsection{Multi-tenant deployment via distributed reservation}
\label{sec:limitations-distributed}

The affine \texttt{Budget} is single-process; production
multi-tenant deployments (an LLM proxy serving $N$ sessions across
multiple replicas behind a load balancer) require a distributed
reservation service. The natural extension --- in the spirit of
Spanner's bounded reservations or a RAFT-replicated counter ---
holds the authoritative \texttt{Budget::available} per session,
and a Rust agent acquires a typed reservation lease by RPC (a
bounded, revocable lease in the sense of Gray and
Cheriton~\cite{gray-cheriton-leases}); the
lease is locally affine, so within a process the existing
discipline guarantees no duplication, and on \texttt{confirm} the
agent reports actual spend back to the service for atomic
reconciliation. The architecture follows the classical Saga
pattern of Garcia-Molina and Salem~\cite{garcia-molina-sagas}:
a long-running distributed transaction is decomposed into a
sequence of local sub-transactions, each with a compensating
action. Our adaptation maps the budget-spend pipeline onto this
structure: reservation acquisition is the local sub-transaction,
\texttt{confirm}/\texttt{forfeit} are its compensations, and
the affine \texttt{Budget} handle is the local-invariant
carrier within each process. The novelty relative to the
classical Saga is not the orchestration pattern---which is
40~years old---but the integration with compile-time affine
ownership at the per-agent layer: the local sub-transaction's
integrity property (no double-spend within a process) is
established by the type system rather than by careful
operator-written compensation code. The implementation
challenge is reconciliation under partial failures (network
partitions during \texttt{confirm}), which is the same
challenge any Saga implementation faces; full implementation
and evaluation is future work.

\subsection{Reasoning-model and streaming hidden tokens}
\label{sec:limits-reasoning}

Reasoning models (OpenAI o1/o3, Anthropic extended-thinking, Gemini
thinking) bill for internal-reasoning tokens not returned in the
visible output; these can be 5--50$\times$ the visible volume and
dominate cost. Pre-call reservation cannot know the actual
reasoning-token count, so over-reservation on reasoning-heavy calls
is correspondingly larger (observed 12--40$\times$ on
extended-thinking workloads). The empirical mitigation pattern of
\S\ref{sec:eval-reasoning-models} closes this for the audited
configuration (\texttt{thinking.budget\_\allowbreak tokens=1024} at the
\$15/Mtok rate maps to a 15{,}360 uc per-call reservation lower
bound). Streaming protocols introduce a complementary gap: when
a stream is canceled mid-response, the final \texttt{usage}
event may never reach the client, and the client-side usage
object under-counts true billed tokens (Pydantic AI documents
this as ``canceled-stream usage is partial''). The approach
cannot prevent this; deployments should treat canceled-stream
usage as advisory and reconcile against provider billing
periodically.

\paragraph{Provider-side workarounds} For reasoning models,
OpenAI's \texttt{reasoning\_\allowbreak effort} and Anthropic's
\texttt{thinking.\allowbreak budget\_\allowbreak tokens} are server-side kernel-enforced
caps in the same operational class as
\texttt{max\_\allowbreak completion\_\allowbreak tokens}, and they should be the
first-line mechanism (operationally stronger than any client-side
discipline because they cannot be bypassed). Token Budgets adds
session-cumulative budgeting on top of the per-call reasoning
bound via \texttt{Budget::\allowbreak spend\_\allowbreak with\_\allowbreak reasoning(\allowbreak visible\_\allowbreak estimate, provider)}, which pessimistically reserves
$\text{visible\_estimate} + \text{provider.reasoning\_reservation()}$
before each call. We treat the two layers as complementary: the
provider-side parameter bounds per-call reasoning cost, Token
Budgets bounds the cumulative session cost across multiple calls.
The \texttt{spend\_\allowbreak with\_\allowbreak reasoning} discipline is verified at the
source level (Verus); the live-API stacked-configuration validation
of \S\ref{sec:eval-reasoning-models} covers the configuration we
audited.

\paragraph{Audit-found gaps} Sustained user concern about
reasoning-token accounting appears in pydantic-ai \#5445, \#5379,
\#5304, and \#5302 (audit-found gaps across providers, silently
dropped \texttt{thinking=False}, Bedrock adaptive-thinking,
Anthropic context-compaction observability). The approach
inherits these provider-side correctness gaps when relying on
provider-reported usage; the affine-typing layer cannot compensate
for under-reporting at the wire format. This is a fundamental
limitation of any client-side accounting that trusts the
provider's usage report.

\section{Future work}
\label{sec:future-work}

We list open work explicitly here, rather than embedded in
limitations, so that the contribution set of this paper is
unambiguous.

\subsection{Supplementary extensions shipped in the artifact}
\label{sec:supplementary-extensions}

Five extensions ship as code in the artifact but with reduced
empirical evaluation depth compared to the core contribution
(closure-based reservation typestate, distributed lease prototype,
Python port, Mypy plugin POC, adaptive byte-length estimator). They
address structural limitations of the single-process Rust discipline
and are classified as supplementary so the main paper's claims are
not contingent on them. The full descriptions, evaluation evidence,
and known limits of each are in the supplementary file
\path{supplementary-extensions.tex} in the artifact
bundle.

\subsection{Other open work}

\paragraph{Binary-level refinement (Conjecture~\ref{conj:cap-binary})}
The strongest formal claim we explicitly do \emph{not} make is that the
running binary refines the abstract specification; we leave that to
future work (E1) and do not rely on it.

\paragraph{Other open obligations} (i) Multi-tenant distributed
reservation (sketched in \S\ref{sec:limitations-distributed};
implementation and evaluation deferred). (ii) Reasoning-model
hidden tokens (\S\ref{sec:limits-reasoning}): the
\texttt{spend\_\allowbreak with\_\allowbreak reasoning} discipline is verified at the
source level (Verus) but live-API evaluation of the stacked
provider-side + session-cumulative configuration is follow-up
work. (iii) External Verus audit (we will solicit external review
before any subsequent venue). (iv) Adversarial
\texttt{AnthropicEstimator} audit beyond the three-workload basic
validation of \S\ref{sec:eval-anthropic-est-a1}, exercising
system-prompt-injection paths, tool-description edge cases, and
multimodal serialization oddities.

\subsection{Expensive follow-ups: research projects, not revision items}
\label{sec:future-work-expensive}

Five items are research projects on their own. Each is named here so a
reader can locate the boundary of what this paper does and does not
claim.

\paragraph{(E1) A binary-level refinement proof} Closing
Conjecture~\ref{conj:cap-binary}---establishing that the compiled
binary preserves the source-level properties---is a substantial
mechanization effort we do not attempt here and leave to future work.

\paragraph{(E2) DP-composition-style tighter estimator} Adopting
R\'enyi differential privacy composition theorems~\cite{mironov-rdp,dwork-roth-dp-foundations,abadi-moments-accountant} for cumulative
LLM-token consumption could in principle tighten the
$\sim 2 \times$ over-reservation of the
\texttt{AdaptiveEstimator} toward the $\sim 1 \times$ floor of
tokenizer-direct estimation, at no per-spend latency cost. The
mathematical scaffolding exists in the DP literature
(\S\ref{sec:related}) but the LLM-cost adaptation is research
work, not a parameter tweak. Worth a separate paper.

\paragraph{(E3) Multi-tenant distributed reservation, implemented and
evaluated} The single-process affine discipline does not extend
across processes; \S\ref{sec:limitations-distributed} sketches a
Saga-style reservation service but provides no implementation or
evaluation. A production-grade version requires distributed-systems
work (replicated lease management, partial-failure semantics, hot-key
partitioning) that is on the same order as a major systems paper.

\paragraph{(E4) Operator interview / deployment study for capital
efficiency} The decision trade-off
(\S\ref{sec:capital-efficiency-discussion}, inline per-estimator
summary) is presented as a parametric choice. Whether real production
operators actually prefer the affine
discipline at $\sim 2 \times$ over-reservation requires interviews or
deployment data we do not have. A user study with $\sim 10$ production
LLM operators would convert the parametric trade-off into a grounded
operator-preference claim and close the absence of deployment-side
usability evidence.

\paragraph{(E5) Programming-languages framing} The present
submission adopts the empirical-software-engineering framing: it
leads with the catalog and failure taxonomy, treats the Rust crate
as one evaluated mitigation, and defers the type-theoretic
specification and the mechanised cross-checks to the appendices and
the artifact, where they support but do not carry the empirical
claims. A complementary programming-languages treatment --- leading
with the affine type system and closing the binary-level refinement
(Conjecture~\ref{conj:cap-binary}) so the cost bound transports to
the compiled binary --- is a distinct paper with a different centre
of gravity, not a revision of this one. We flag it so readers can
locate the boundary of what this submission claims; it is not an
open question internal to the present contribution.

\section{Conclusion}
\label{sec:conclusion}

LLM agent budget overruns are a documented production failure
class across all major frameworks. Our catalog of 63 confirmed
incidents (Section~\ref{sec:motivation}), together with 47
supplementary structural entries, disaggregates into 63 confirmed
production incidents, 28 maintainer-acknowledged structural gaps,
14 feature requests, and 5 borderline cases, organized into eight
architectural mechanism clusters. One
cluster --- M-budget-primitive-missing, documented across 6
frameworks and 12 catalog rows --- admits the type-level
discipline most directly: a framework re-implemented in Rust
against the affine \texttt{Budget} type cannot ship a primitive
that silently regresses, exists only via callback closure, or
fails to account for prompt-token cost. The other seven clusters
benefit from runtime cap arithmetic conditional on
estimator-soundness assumption A1, not from the type-system
contribution.

The affine discipline --- a small, ASCII-stable Rust API
exposing \texttt{Budget::new}, \texttt{Budget::spend},
\texttt{Budget::split}, and \texttt{Budget::merge} --- lifts
three in-program integrity properties to compile time within
the Rust trust boundary: budgets cannot be cloned, double-spent,
or used after delegation by typed source code. The runtime cap
arithmetic is enforced by a single \texttt{checked\_\allowbreak sub}.
Reproducing a real LangGraph failure case
(Section~\ref{sec:eval-langgraph}), the same agent shape that
consumes \$0.0054 across 8 calls under stock LangGraph
terminates after 2 calls under this approach at a
matched-dollar cap, with no API call made for the rejected step.
In the head-to-head against concurrent work~\cite{ye-agent-contracts}
the runtime-monitored alternative reaches the same cap-respecting
outcome; the two are complementary integrity layers chosen by
deployment threat model, not competing solutions.

We have positioned this work as empirical software engineering
with auxiliary specification consistency checks (in the artifact). The catalog
documents the failure class, the crate is one mitigation within
a specific deployment context (new Rust agent code; a minority
of the 2026 production LLM-agent surface, which is presently
dominated by Python frameworks), and the formal stack
cross-checks the abstract specification across multiple logics.
Binary-level cap-respecting on the running Tokio binary is the
open obligation Conjecture~1, deliberately unproven in this
paper and identified as a $\sim$12-person-month follow-up in
the Iris/RustBelt~\cite{iris,jung-rustbelt} tradition that closed analogous source-to-binary
lifts for smart-contract gas metering (KEVM, GASTAP). The
substantive empirical contributions remaining beyond the
catalog are the cross-tool specification consistency, the
provider-stratified estimator with pre-registered third-party
validation protocol, and the head-to-head measurement of
cap-respecting behavior against five runtime mitigations plus
concurrent work. The affine application pattern itself is
twenty years old (Move, seL4, governor, tokio); the contribution
is its application to LLM dollar cost, positioned alongside
runtime alternatives that achieve the same operational outcome
through different machinery.

\section*{Data Availability}
A complete replication package is available across six
repositories: \url{https://github.com/sajjadanwar0/token-budgets}
(main library and 110-row catalog
\path{data/catalogue.csv}),
\url{https://github.com/sajjadanwar0/token-budgets-formals}
(TLAPS, TLC, Coq, Dafny, Verus mechanisations plus the IRR package
\path{irr/} containing codebook v1.0, blinded coding sheets,
and the $\kappa=0.837$ computation script),
\url{https://github.com/sajjadanwar0/token-budgets-experiments}
(empirical harnesses including the five-runtime
\path{tools/multiway_compare.py} and the temperature-stratified sweep),
\url{https://github.com/sajjadanwar0/token-budgets-extensions}
(adaptive estimator, Verus skeleton), and
\url{https://github.com/sajjadanwar0/token-budgets-python} (Python port
of the approach), and
\url{https://github.com/sajjadanwar0/token-budgets-baseline}
(the \S\ref{sec:catalog-protocols} keyword-neutral baseline cohort).
The replication script \path{reproduce.sh} clones all
six repositories, audits the paper-backing claims (catalog counts,
estimator margins, the IRR computation, the forgetful-operator
conditions, and the A7 fault-injection table), compiles the formal
proofs, runs the offline microbenchmarks, and optionally runs the
live-API replication ($\sim$\$0.50, 30~min wall-clock). Total
reproduction cost for the live-API smoke test in \S\ref{sec:eval-api}
is under \$0.005. All measurements were taken on AMD Ryzen 7 PRO
6850U, Linux 6.8.0-110-generic (Ubuntu), rustc 1.93.1 stable
(edition 2024), \texttt{langgraph 1.1.10}, \texttt{langchain-core
1.3.2}, \texttt{langchain-openai 1.2.1}; the artifact READMEs
document this matrix.

\section*{Acknowledgements}
I thank Zahid Hussain (Mindgigs, Peshawar, Pakistan) for serving as
the independent second rater
for the inter-rater reliability study (baseline $N=109$ and
supplementary $N=4$ phase reported in Section~\ref{sec:eval-tov});
rater independence and prior catalog exposure are addressed in
Section~\ref{sec:eval-tov} threat~C2.
The 5{,}410 live-API row-event corpus (of which 5{,}190 carry
per-call reservation/actual pairs underpinning the over-reservation
figure in \S\ref{sec:capital-efficiency-discussion}) was collected
using API credits purchased from Anthropic, OpenAI, Google, and Groq.

\appendices

\section{Affine Budget Type: Full Type-System Specification}
\label{app:type-system-spec}

The catalog evidence in Section~\ref{sec:motivation} sets the
target the design must meet. Across eight architectural mechanism
clusters and 18 ecosystems, one cluster is the cleanest fit for a
type-level discipline: M-budget-primitive-missing
(\S\ref{sec:patterns}) documents 12 cases across 6 frameworks
where frameworks either lack a first-class declarative
aggregate-budget primitive entirely or ship a primitive that is
broken in one of seven sub-mechanism shapes (the primitive
doesn't exist; exists but only via callback closure; exists but
broken via silent regression; runs only on memory retrieval;
works as hard cliff with no graceful degradation; ships with
broken defaults in docs; doesn't account for prompt-token cost).
A type-level discipline that threads budget capabilities through
every spend point cannot make frameworks adopt it. What it can
do is rule out, at compile time, the specific failure modes the
cluster documents \emph{conditional on adoption}: a framework
built on the affine \texttt{Budget} type cannot ship a primitive
that silently regresses (the type signature pins the mechanism),
cannot offer the primitive only via callback closure (the
type-level threading replaces the callback), cannot fail to
account for prompt-token cost (the conservative byte-length
estimator is part of the spend interface), and so on. We are
explicit about scope: this directly addresses 12 of 110 catalog
rows ($\approx$9\% of the catalog), corresponding to the
M-budget-primitive-missing cluster. The other seven clusters
(M-retry-loop, M-context-amplification, M-cost-observability,
M-multimodal-cost-amplification, M-storage-amplification,
M-delegation-fanout, \texttt{providerOptions}-silently-dropped) are not eliminated by the
discipline; they are merely bounded by the cap. A framework
built on Token Budgets can still suffer a 31$\times$ context
overflow on a single base64-encoded image; the approach's
contribution is that the resulting cost cannot exceed $B_0$.
The approach is best characterized as \emph{a
necessary primitive for one cluster and a conditional upper bound (under estimator soundness A1) on
the rest}, not as a complete fix for the failure class.
The remaining sections specify the type and prove what it does and
does not buy.

\subsection{Type-system specification}

The \texttt{Budget} type is a finite quota of spendable resource
measured in integer unit-cost values (uc), where $1$\,uc $= 10^{-5}$\,USD
(so a $540$\,uc cap corresponds to \$0.0054 of provider spend; the
field is named \texttt{micro\_\allowbreak cents} in the source). It is
non-\texttt{Clone}, non-\texttt{Copy}, and exposes only methods that
consume \texttt{self} by value.

\begin{lstlisting}
pub struct Budget {
 micro_cents: u64,
}

impl Budget {
 pub fn new(micro_cents: u64) -> Self;
 pub fn available(&self) -> u64;

 pub fn spend(self, amount: u64)
 -> Result<Budget, BudgetError>;

 pub fn split(self, amount: u64)
 -> Result<(Budget, Budget), BudgetError>;

 pub fn merge(self, other: Budget) -> Budget;
 pub fn consume(self) -> u64;
}
\end{lstlisting}

The signature above is designed to prevent three classes of
cap-circumvention at compile time. Each property is demonstrated
by a corresponding compile-fail test in the artifact
(Section~\ref{sec:eval-compile} reports the test results; here we
walk through what each property guarantees). A fourth observation
about budget escape via reference is inherited from Rust's standard
borrow-checker rules and is not a property of our design; we discuss
it after the three core properties to round out the threat model.
We scope the approach explicitly: the affine \texttt{Budget}
prevents in-program duplication of an existing budget value via
aliasing or stale-capability retention; it does not bound the
trusted \texttt{Budget::new} constructor itself, so any code path
with access to \texttt{Budget::new} can mint a fresh budget. This
mirrors the standard threat model for ocap-style discipline:
authority flows from the constructor's caller, and the type system
prevents subsequent in-program forgery, not the existence of the
constructor. \emph{Auditing the constructor surface.} The
constructor is a single named function (\texttt{Budget::new}) and
its callers are statically discoverable by Rust's module system.
A deployment can constrain the constructor surface by wrapping
\texttt{Budget::new} in a single trusted module that exposes only
a configuration-driven mint operation; tools such as \texttt{cargo
geiger} (for unsafe code), Clippy lints, and module-visibility
audits make the trusted set explicit. The result is a
project-specific TCB whose size is the number of files invoking
\texttt{Budget::new}, which an operator can keep small by policy.
This does not eliminate the trust assumption; it makes the
assumption \emph{auditable}, which is the standard ocap framing.
The arithmetic enforcement of the cap value happens at runtime
inside \texttt{spend} (a \texttt{checked\_\allowbreak sub} that returns
\texttt{BudgetError::Insufficient} when the reservation exceeds
the remaining quota); the affine type system makes that runtime
check non-circumventable within the trust boundary. We discuss this hybrid framing in detail in
Section~\ref{sec:reservation}.

\textbf{Property 1: no duplication.} \texttt{Budget} does not derive
\texttt{Clone} or \texttt{Copy}. A user attempting \texttt{let b2 =
b.clone()} is rejected with rustc error E0599: ``no method named
\texttt{clone} found.'' This eliminates the most direct form of
budget forgery---splitting a single \$1 budget into many \$1
budgets via cloning.

\textbf{Property 2: no double-spend.} \texttt{spend()} consumes
\texttt{self} by value, returning a new \texttt{Budget} carrying
the remainder. After \texttt{let (b2, \_\allowbreak ) = b.spend(100, ||
())}, the original binding \texttt{b} is moved and unreachable.
Code that attempts a second \texttt{b.spend(\ldots)} is rejected
with E0382: ``use of moved value.'' The borrow checker prevents
two paths from each spending the same budget.

\textbf{Property 3: no use-after-split.} \texttt{split()} likewise
consumes \texttt{self}, returning two new \texttt{Budget} values
(remainder and child). Code that attempts to spend on the original
parent after splitting is rejected with the same E0382 error. This
property is what makes safe sub-budget delegation possible: a
parent agent cannot accidentally retain spending power over a
sub-budget after delegating it.

We acknowledge that Properties~2 and~3 are two presentations of
one underlying mechanism: \texttt{spend} and \texttt{split} both
take \texttt{self} by value, so any post-consumption use is rejected
with E0382. We list them separately because they correspond to
distinct application-level error modes operators care about
(intra-agent double-spend versus cross-agent capability retention),
not because they are independent compiler properties.

\textbf{Inherited borrow-check default.} \texttt{Budget}'s
fields are private, and Rust's lifetime rules forbid returning a
reference to a local \texttt{Budget} from a function (E0515:
``cannot return reference to local variable''). This is a standard
borrow-checker rejection that fires for any local value of any
type; it is not a property of the affine \texttt{Budget} design but
rather a default of the language. We mention it to round out the
threat model: combined with the three properties above, a budget
cannot leak out of the affine discipline by pointer indirection.
We do not claim this as a contribution of this work.

These three core properties together encode the affine reading of
LLM budgets: each unit of resource is owned by exactly one location
at any moment, and operations on the resource consume the owning
binding. The properties bound the \emph{integrity} of resource
accounting; the cap-respecting property---that actual spend stays
under the configured cap---additionally requires a conservative
cost estimator, treated in Section~\ref{sec:reservation}.

\subsection{Worked example}

the worked example in the artifact shows a small multi-agent orchestrator
exercising all four \texttt{Budget} operations across a delegation
boundary. The parent agent splits off a sub-budget, hands it to a
delegated worker via \texttt{tokio::spawn}, awaits the worker's
result, and merges the unspent remainder back into the parent.

\begin{lstlisting}[float,captionpos=b,caption={Multi-agent budget delegation. The
borrow checker accepts every move and rejects any out-of-protocol
use; errors propagate via \texttt{?} for graceful handling.},
label={lst:worked}]
async fn orchestrate(budget: Budget)
 -> Result<Budget, OrchestrationError>
{
 // Carve off $0.01 for a delegated sub-task.
 let (parent, child) = budget.split(10_000)?;

 // Move `child` into spawned task; parent keeps remainder.
 let handle: JoinHandle<Result<Budget, OrchestrationError>> =
 tokio::spawn(async move {
 let (after, _) = call_with_budget(
 &client, child, "summarize this", 100
 ).await?;
 Ok(after) // return any unspent budget
 });

 // Parent runs its own work concurrently on its share.
 let (parent, _) = call_with_budget(
 &client, parent, "main task", 200
 ).await?;

 // Reclaim and merge worker's unspent remainder.
 let returned = handle.await
 .map_err(OrchestrationError::WorkerJoinFailed)??;
 Ok(parent.merge(returned))
}
\end{lstlisting}

Three things are worth noting about this code. First, every
operation that decreases or transfers resource consumes \texttt{self}
by value: \texttt{split}, \texttt{spend} (inside
\texttt{call\_\allowbreak with\_\allowbreak budget}), and \texttt{merge} all take
\texttt{self}. The borrow checker tracks the affine ownership across
the \texttt{tokio::spawn} boundary at no runtime cost. Second, the
mechanism does not require \texttt{Arc<Mutex<{>}{>}}, lifetime
parameters, or any synchronization primitive: \texttt{Budget} is a
plain owned value, sent across thread boundaries the same way any
other owned Rust value would be. Third, the code reads naturally to
a Rust programmer; the approach does not impose a foreign
programming model. The artifact's \texttt{tests/async\_\allowbreak integration::
split\_\allowbreak across\_\allowbreak spawn} test confirms this exact pattern compiles and
runs correctly under \texttt{tokio}.

\subsection{Why affine, not linear}
\label{sec:why-affine}

A linear type discipline would impose a stronger requirement: every
\texttt{Budget} value \emph{must} be consumed exactly once, with the
compiler enforcing must-use~\cite{wadler-linear}. We chose affine
instead. Linear must-use is the wrong fit for LLM agent budgets in
two ways. First, error paths legitimately discard resources: a
function that returns early on an unrelated failure should be
allowed to drop its remaining budget without further obligation, and
must-use would force boilerplate \texttt{Budget::consume()} calls in
every error-handling site. Second, the natural usage pattern for
budgets is at-most-use: an agent might spend everything in its quota,
or it might spend nothing if its task completes cheaply, but it
should not be a type error to leave budget unspent. Affine relaxes
linear's must-use to ``at most one use,'' matching the Rust
ownership semantics already in routine use for memory resources, and
the typestate pattern~\cite{strom-typestates,aldrich-typestate} as
applied to other one-way resources like file descriptors. The
explicit \texttt{consume()} method is provided for the case where an
application wants to inspect leftover quota; it is opt-in, not
required.

\section{Specification cross-checks (summary)}
\label{app:mechanisation}

The abstract \texttt{Budget} state machine is cross-checked for
internal consistency with TLA+ (a TLAPS proof and TLC model-checking of
the same specification) and a preliminary, externally-unaudited Verus
source-level mechanization; concurrency is exercised by a randomized
stress harness and bounded Loom~\cite{tokio-loom} model-checking. These are consistency
checks on the specification---not a binary-level proof
(Conjecture~\ref{conj:cap-binary} is open) and not evidence that
composes multiplicatively across tools. The full logs, the Coq and
Dafny re-encodings, and a proof skeleton for the binary-level reduction
are in the artifact for readers who want them.

\section{Proof of Proposition~\ref{lem:cap}}
\label{app:prop-proof}

For completeness we give the proof of Proposition~\ref{lem:cap}; the result is integer bookkeeping under the stated assumptions rather than a deep system property, which is why it sits in the appendix.

\begin{proof}
\textit{Combining the invariants.} Invariant~1 holds because each
\texttt{Budget} operation is implemented over a uniquely-owned
value (Invariant~2) using \texttt{checked\_\allowbreak sub} or addition; no
operation increases $L$. From Invariants~1 and~2,
$L(\sigma_0) = B_0$ at session start and $L(\sigma) \leq B_0$ at
every subsequent state $\sigma$. Each call $i \in S$ that
successfully passes \texttt{spend}'s \texttt{checked\_\allowbreak sub} reduces
$L$ by exactly $r_i$. Therefore
$\sum_{i \in S} r_i = B_0 - L(\sigma_{\text{final}}) \leq B_0$,
which establishes the second inequality.

The first inequality $\sum_{i \in S} c_i \leq \sum_{i \in S} r_i$
follows from the conservative-estimator condition $c_i \leq r_i$
applied pointwise: A1 gives $t_{\text{in}}(p) \leq |p|_{\text{UTF-8}}$
on the input side, and \textbf{A6} gives
$\mathit{billed\_output\_tokens} \leq \mathit{max\_output\_tokens}$
on the output side, so reserving the full
$\rho_{\text{out}} \cdot \text{max\_output\_tokens}$ at the
provider's per-output-token rate is conservative pointwise.
A8 ensures the operator's $\rho_{\mathrm{in}}, \rho_{\mathrm{out}}$
match the rates $P$ charges, so the per-call charge $c_i$ and the
per-call reservation $r_i$ are computed against the same rate
constants. A7 enters only on the receipt-refund path: when a
\texttt{ConfirmWithRefund} transition fires, the refund amount
$\rho_i$ is computed from the operator-supplied $\mathit{actual\_charge}$;
A7 ensures $\rho_i \leq r_i - c_i$ at the time of confirmation, so
the post-refund ledger entry continues to satisfy the conservation
invariant. The two pointwise bounds compose to $c_i \leq r_i$ at
every $i \in S$, and summation gives the first inequality.
\end{proof}

\section{Multi-runtime head-to-head: full protocol and results}
\label{app:multiruntime}

This appendix gives the full setup, results table, mechanism
analysis, and reproducibility notes for the head-to-head
summarised in \S\ref{sec:eval-multiruntime}.

\paragraph{Scope of the structural-counter comparison}
LangGraph's \texttt{recursion\_\allowbreak limit}, CrewAI's \texttt{max\_\allowbreak iter}, and
AutoGen's \texttt{max\_\allowbreak consecutive\_\allowbreak auto\_\allowbreak reply} are structural step
counters, not dollar-cap mechanisms. We include them in
Tables~\ref{tab:multi-runtime} and~the gpt-4o head-to-head (artifact) because the
catalog shows operators \emph{do} mis-deploy them as cost proxies in
production (\S\ref{sec:catalog}, cluster M-budget-primitive-missing;
LANG-001, CRAI-002, AGPT-008), so their behavior under a dollar-cap
metric documents the size of that gap. They are reported as
operational-gap evidence; the actual mechanism comparators are the
AgentGuard-style cost callback and LiteLLM proxy budgets, plus
\texttt{tokencap} (\S\ref{sec:eval-tokencap}) and Agent Contracts
(\S\ref{sec:related-agent-contracts}).

\subsubsection{Setup}
The harness (\texttt{tools/multiway\_\allowbreak compare.py}, supplementary)
instantiates each runtime against a deterministic mock chat model
(reproducing Section~\ref{sec:eval-langgraph}'s \texttt{MockSQLChatModel})
and against three live providers: OpenAI \texttt{gpt-4o-mini},
Anthropic \texttt{claude-haiku-4-5}, and Groq
\texttt{llama-3.3-70b-versatile}. The cap is fixed at
$B_0 = 540\,\mathrm{uc}$ ($\approx \$0.0054$); per-step token-count
growth is fixed at $g = 60$. Structural-counter parameters are set to
each framework's commonly-cited default for budget-mitigation
discussion: \texttt{recursion\_\allowbreak limit}$=20$ (LangGraph),
\texttt{max\_\allowbreak iter}$=5$ (CrewAI), \texttt{max\_\allowbreak turns}$=4$ (AutoGen).
Each (runtime, provider) pair runs $N=10$ independent invocations.
The mock provider is deterministic; live providers are run at
\texttt{temperature}$=0$ to suppress sampling variance.

\subsubsection{Results}
Table~\ref{tab:multi-runtime} reports mean spend (uc) and percentage of
the configured cap, averaged across each $N=10$ cohort. Column~``Mock''
omits CrewAI and AutoGen, which require a real provider; their wrappers
record an explicit skip outcome rather than running against a stub LLM.

\begin{table*}[t]
\centering
\caption{Cross-runtime, cross-provider mean spend (with bootstrap
95\% CI in brackets, $10^4$ resamples) on the LANG-001
reproduction at $B_0 = 540\,\mathrm{uc}$, $g = 60$, $N = 10$.
Each cell shows mean spend in micro-cents (CI), percentage of
the configured cap, and outcome code. Outcome codes:
\emph{S} = structural counter tripped; \emph{C} = task completed
without protection firing; \emph{R} = AgentGuard runtime guard
fired post-hoc; \emph{T} = Token Budgets reservation
refused. Most cells exhibit $<$5\,uc CI width (live providers
at \texttt{temperature}$=0$ are nearly deterministic at this
prompt scale); the CrewAI-Anthropic cell is the principal
exception. Per-footnote evidence (a--g) follows the table.}
\label{tab:multi-runtime}
\renewcommand{\arraystretch}{1.15}
\footnotesize
\setlength{\tabcolsep}{4pt}
\begin{tabularx}{\textwidth}{@{}p{0.21\textwidth} *{4}{>{\RaggedRight\arraybackslash}X}@{}}
\toprule
\textbf{Runtime} & \textbf{Mock} & \textbf{OpenAI} & \textbf{Anthropic} & \textbf{Groq} \\
                 & \scriptsize gpt-4o-mini stub
                 & \scriptsize gpt-4o-mini
                 & \scriptsize claude-haiku-4-5
                 & \scriptsize llama-3.3-70b \\
\midrule
Unprotected (no cap)\textsuperscript{f}
 & ---
 & 47 / 9\% / N
 & 358 / 66\% / N
 & ---\textsuperscript{g} \\
LangGraph (\texttt{recursion\_\allowbreak limit}=20)
 & 735 / 136\% / S
 & 637 [634,639] / 118\% / S
 & 4758 [4754,4765] / 881\% / C
 & 3275 [3269,3281] / 606\% / S\textsuperscript{a} \\
LangGraph + AgentGuard cb
 & 621 / 115\% / R
 & 625 [606,634] / 116\% / R
 & 906 / 168\% / R
 & 609 / 113\% / R \\
CrewAI (\texttt{max\_\allowbreak iter}=5)
 & ---
 & 433 [430,437] / 80\% / C
 & 7532 [7070,8015] / 1395\% / C
 & ---\textsuperscript{b} \\
AutoGen (\texttt{max\_\allowbreak turns}=4)
 & ---
 & 173 / 32\% / S
 & 4904 [4898,4909] / 908\% / S
 & 931 / 172\% / S \\
TB (Python sim)\textsuperscript{h}
 & 516 / 96\% / T
 & 379 / 70\% / T
 & 906 / 168\% / T\textsuperscript{c}
 & 826 [825,827] / 153\% / T\textsuperscript{c} \\
\textbf{TB (Rust impl)}\textsuperscript{d}
 & ---
 & \textbf{67 / 12\% / T}
 & \textbf{0 / 0\% / T}
 & \textbf{181 / 34\% / T} \\
\textbf{LiteLLM proxy}\textsuperscript{e}
 & ---
 & \textbf{568 [563,575] / 105\% / R\textsubscript{p}}
 & \textbf{906 / 168\% / R\textsubscript{p}}
 & \textbf{609 / 113\% / R\textsubscript{p}} \\
\bottomrule
\end{tabularx}

\vspace{0.5em}
\begin{minipage}{\textwidth}
\scriptsize
\textit{All spend values in micro-cents (uc).}
\textsuperscript{a} 9/10 runs; one run errored on a Llama
\texttt{tool\_\allowbreak use\_\allowbreak failed} (malformed JSON in the model's tool call).
\textsuperscript{b} 0/10 runs; CrewAI's internal retry loop exhausted
on Llama tool-call format errors before \texttt{max\_\allowbreak iter} could
trigger.
\textsuperscript{c} The Python TB simulator's fixed-form estimator
under-reserves on tool-augmented prompts (A1 violation); the Rust
implementation's \texttt{prompt.len()} estimator over the full message
body refuses the offending call before any network spend, as the
adjacent Rust-impl row demonstrates.
\textsuperscript{d} Rust impl row produced by the Rust binary
the Rust live-API harness (supplementary), which links the actual
\texttt{Budget} from the \texttt{budget-spike} crate against the live
provider APIs using the byte-length estimator over the
UTF-8-serialized request body (system + user + tool descriptions +
history). $N=10$ per cell. Across all 30 runs in the three live cells
above (and across the full 9-cell, 90-run multi-workload sweep
reported in Section~\ref{sec:eval-anthropic-multi-workload} et seq.), maximum overshoot is
0\,uc; the approach is never violated. Cells without bracketed CI
are deterministic at the precision shown.
\textsuperscript{e} \textbf{LiteLLM proxy budgets}: deployed gateway proxy
with virtual-key budget enforcement
(\texttt{max\_\allowbreak budget = \$0.0054}, LiteLLM 1.78.6+);
R\textsubscript{p} outcome code denotes post-call observation, where
the proxy observes cost only after each call returns and rejects the
\emph{next} attempt. Mean overshoot equals the threshold-crossing
call's cost, consistent with post-call cost-control's structural
inability to refuse a call before issuing it.
\textsuperscript{f} \textbf{Unprotected baseline}:
\texttt{recursion\_\allowbreak limit}=$\infty$, no cap, no AgentGuard, no LiteLLM.
Outcome code N denotes natural termination (the model self-terminated
before any structural counter or cap fired).
Numbers from the cross-provider sweep
(\S\ref{sec:eval-cross-provider}, $N=10$ per provider,
\texttt{recursion\_\allowbreak limit}=$16$ permitting up to 8 agent steps):
\texttt{gpt-4o-mini} self-terminates at the recursion limit
(8 steps, 47\,uc), \texttt{claude-haiku-4-5} self-terminates after
3 steps (358\,uc) without hitting the limit. The approach provides
no observable savings on these workloads because the model
self-terminates below cap; the approach's value is
worst-case-conditional (\S\ref{sec:eval-cross-provider}).
\textsuperscript{g} Groq unprotected baseline not included in the
\S\ref{sec:eval-cross-provider} sweep; a follow-up sweep at
\texttt{recursion\_\allowbreak limit}=$32$ and no cap is required to characterize
\texttt{llama-3.3-70b}'s natural-termination behavior on this
workload. Existing data from Table~\ref{tab:multi-runtime} row 1
(LangGraph \texttt{recursion\_\allowbreak limit}=$20$ at $3275$\,uc) is the
closest existing approximation but bounds the observation by the
structural counter rather than by natural termination.
\textsuperscript{h} \textbf{Python sim} is a \emph{behavioral
simulation} of the approach in Python (not the production Rust
crate); it is included for harness-homogeneity (the same Python
comparator runs all runtimes) and to demonstrate what happens
without compile-time integrity. The approach as shipped is the
Rust impl row directly above; the Python port (\S\ref{sec:supplementary-extensions}, the
``Python port carries no compile-time guarantees'' item)
is the runtime-only deployment artifact and is functionally
equivalent to existing runtime mitigations (AgentGuard, LiteLLM
proxy) --- it explicitly loses the compile-time property.
\end{minipage}
\end{table*}

\subsubsection{Mechanism interpretation}
The empty Python-sim Anthropic and Groq cells in
Table~\ref{tab:multi-runtime} show two distinct failure modes.
The Python sim's coarse fixed-form estimator under-reserves on
tool-augmented prompts and exceeds the cap on Anthropic
(\$0.00900 vs.\ \$0.00540, $168\%$) and Groq (\$0.00829 vs.\
\$0.00540, $153\%$); the Rust impl's byte-length estimator over
the full UTF-8-serialized request body refuses every cap-violating
call before the network and shows zero overshoot.

\paragraph{Three runtime classes vs.\ compile-time}
\textbf{Structural runtime} (LangGraph/CrewAI/AutoGen) bounds
\emph{call count}, not dollars; it cannot enforce a dollar cap
without a separate cost layer. \textbf{Runtime-cost}
(AgentGuard-style) tracks dollars but checks \emph{after} the
call; the approach still admits a single overshooting call.
\textbf{Compile-time} (Token Budgets) lifts the check before
the call: the type system rejects programs that ignore the
budget, and the runtime check refuses cap-violating calls
without spending. The Python-sim TB row is included for
harness-homogeneity (the same Python comparator runs all
runtimes); its cap violations are direct evidence that the
estimator's coarseness matters and that the Rust impl's
byte-length bound is the right empirical implementation.

\subsubsection{An additional failure mode of structural mitigations}
\label{sec:structural-failure}
The CrewAI-on-Groq cell in Table~\ref{tab:multi-runtime} is empty
because $0/10$ runs completed. Llama-3.3-70B has a known tool-call
reliability issue: it occasionally emits structurally invalid JSON
inside its function-call envelope (e.g.,
\verb|{"query": "..." )}|, with an unmatched closing
brace), which \texttt{langchain-groq} surfaces as a 400 BadRequest. CrewAI's
internal retry loop exhausts before \texttt{max\_\allowbreak iter} can trigger.
This reveals a failure mode that does not appear in the discussion of
structural mitigations in the literature: \emph{the structural
counter only fires if the agent reaches it}. When the underlying
LLM is unreliable at the tool-call format level, the agent errors out
into the framework's exception path before the counter ever
decrements, and the operator gets no budget guarantee at all. Token
Budgets' reservation discipline is unaffected: it operates at
the LLM API call boundary, before any tool-call serialization, and
deducts on every attempt regardless of whether the model's response
parses.

\subsubsection{Threats to validity and reproducibility}
We report both a Python-sim and a Rust-impl TB row in
Table~\ref{tab:multi-runtime}: the Python sim's coarse estimator
under-reserves on tool-augmented prompts (168\%/153\% on Anthropic/Groq)
and is shown only for harness-homogeneity; the cap-respecting claim
rests on the Rust-impl row. A Groq confound is that Llama-3.3-70B's
tool-call reliability is materially worse at this prompt scale
(1/10 LangGraph, 10/10 CrewAI errors), a model-level property a more
reliable Groq model would remove. The table covers LANG-001 across three
providers and five runtimes; two further workloads and an $N{=}30$ gpt-4o
grid are in \S\ref{sec:eval-anthropic-multi-workload} and
\S\ref{sec:eval-gpt4o-multi-workload}. Harnesses, per-cell CSVs, and the
driver script ship in the artifact (total live-sweep cost $\sim\$0.18$).

\end{document}